\newtheorem{Theorem}{Theorem}[section]
\newtheorem{Conjecture}[Theorem]{Conjecture}
\newtheorem{Lemma}[Theorem]{Lemma}
\newtheorem{Corollary}[Theorem]{Corollary}
\newcommand{\torr}{\stackrel{R}{\rightarrow}}
\newcommand{\vs}{\vspace{0.5cm}}
\begin{document}
\date{\today}
\title{Rainbow Induced Subgraphs in Replication Graphs}
\author{Marek Szyku{\l}a and Andrzej Kisielewicz}
\address{University of Wroc\l aw\\
Department of Mathematics and Computer Science\\
pl. Grunwaldzki 2, 50-384 Wroc\l aw, Poland}
\email{marek.szykula@ii.uni.wroc.pl, andrzej.kisielewicz@math.uni.wroc.pl}
\keywords{rainbow, induced subgraph, vertex coloring, replication graph}

\begin{abstract}
A graph $G$ is called a replication graph of a graph $H$ if $G$ is obtained from $H$ by replacing vertices of $H$ by arbitrary cliques of vertices and then replacing each edge in $H$ by all the edges between corresponding cligues. For a given graph $H$ the $\rho_R(H)$ is the minimal number of vertices of a replication graph $G$ of $H$ such that every proper vertex coloring of $G$ contains a rainbow induced subgraph isomorphic to $H$ having exactly one vertex in each replication clique of $G$. We prove some bounds for $\rho_R$ for some classes of graphs and compute some exact values. Also some experimental results obtained by a computer search are presented and conjectures based on them are formulated.

\end{abstract}

\maketitle

%%%%%%%%%%%%%%%%%%%%%%%%%%%%%%%%%%%%%%%%%%%%%%%%%%%%%%%%%%%%%%%%%%%%%%%%%%%%%%%%%
\section{Introduction}

Rainbow induced subgraphs have been considered in many papers in connection with various problems of extremal graph theory. They have been considered both for edge-colorings and vertex-colorings, and both in terms of existence or in terms of avoiding (see \cite{AC,AI,AM,AS,KMSV,KT,KL}). Our special motivation comes from research in on-line coloring (see \cite{BCP,Ma}), where the base for some algorithms is the existence of rainbow anticliques to force a player to use a new color. In particular, in \cite{Ma}, a problem has been formulated to estimate the minimal number of moves in the game considered one needs to force the appearance of a rainbow copy of a fixed graph $H$ in a fixed class of graphs $C$. 

This paper is continuation of \cite{KS}, where the number $\rho(H)$ being the minimum order of a graph $G$ such that every proper vertex coloring of $G$ contains a rainbow induced subgraph isomorphic to $H$ was introduced. It turned out that in certain situations, especially for paths, more natural and easier to handle is the number $\rho_R(H)$ referring to replication graphs defined as follows.

Given a graph $H$ and a vertex $v\in H$, we construct a graph $H'$ by adding a new vertex $v'$ and edges joing $v'$ with $v$ and all the neighbours of $v$ in $H$. Aany graph $G$ obtained fro $H$ by a series of such constructions is called a replication graph of $G$. Note that vertices in $G$ corresponding to a vertex $v\in H$ form a clique. 

If $G$ is a replication graph of $H$ such that in each proper vertex coloring of $G$ there exists a rainbow induced subgraph $H$ \textbf{having exactly one vertex in each of the cliques} $K_i$ corresponding to a vertex $h_i$ in $H$, then we write $G\torr H$. By $\rho_R(H)$ we denote the minimal number of vertices in any replication graph $G$ of $H$ satisfying $G \torr H$.

In this paper we provide some bounds for $\rho_R$ for certain classes of graphs. Following problems formulated in \cite{KS} our main interest is in paths. In addition, the exact value of $\rho_R$ for a double star is computed. We present also some experimental results obtained by computer search for small paths and conjectures based on them.

%%%%%%%%%%%%%%%%%%%%%%%%%%%%%%%%%%%%%%%%%%%%%%%%%%%%%%%%%%%%%%%%%%%%%%%%%%%%%%%%%
\section{The Hall's type theorem}

We start from the following more general result. 

\begin{Theorem}
Let $H$ be a graph and $G$ be its replication graph. The following conditions are equivalent:
\begin{itemize}
\item For each subset $S$ of the vertices of $H$, the chromatic number of the subgraph of $G$ induced by the replication cliques corresponding to the vertices from $S$ (denoted by $G[S]$) is at least $|S|$.
\item $G \torr H$.
\end{itemize}
\end{Theorem}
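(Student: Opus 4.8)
The plan is to recognize the statement as Hall's marriage theorem applied to the color classes that a proper coloring induces on the replication cliques. Write $h_1,\dots,h_n$ for the vertices of $H$ and $K_1,\dots,K_n$ for the corresponding replication cliques of $G$. The first thing I would record is that any set $T=\{v_1,\dots,v_n\}$ with $v_i\in K_i$ induces a copy of $H$: inside each clique $T$ contains only $v_i$, and for $i\ne j$ the pair $v_iv_j$ is an edge of $G$ exactly when $h_ih_j$ is an edge of $H$. Call such a set a \emph{transversal}. Hence $G\torr H$ is precisely the assertion that every proper vertex coloring of $G$ admits a \emph{rainbow} transversal, i.e.\ a transversal whose $n$ vertices receive $n$ distinct colors.

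Next I would translate this into a system-of-distinct-representatives condition. Fix a proper coloring $c$ of $G$ and put $C_i=c(K_i)$; since $K_i$ is a clique, $|C_i|=|K_i|$. Choosing a rainbow transversal amounts to choosing pairwise distinct colors $a_i\in C_i$ and then taking the unique vertex of $K_i$ that receives $a_i$. Thus a rainbow transversal exists for $c$ if and only if the family $(C_1,\dots,C_n)$ possesses a system of distinct representatives, which by Hall's theorem happens if and only if $\bigl|\bigcup_{i\in S}C_i\bigr|\ge|S|$ for every $S\subseteq\{1,\dots,n\}$. Finally, $\bigcup_{i\in S}C_i$ is exactly the set of colors that $c$ uses on $G[S]$, and the number of colors used by any proper coloring of $G[S]$ is at least $\chi(G[S])$.

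The implication from the chromatic condition to $G\torr H$ is then immediate: if $\chi(G[S])\ge|S|$ for all $S$, then for an arbitrary proper coloring $c$ and an arbitrary $S$ we get $\bigl|\bigcup_{i\in S}C_i\bigr|\ge\chi(G[S])\ge|S|$, Hall's condition holds, and a rainbow transversal exists. For the reverse implication I would argue by contraposition. Assume $\chi(G[S])\le|S|-1$ for some $S$. Take a proper coloring of $G[S]$ using only colors from $\{1,\dots,|S|-1\}$ and extend it to a proper coloring $c$ of $G$ by coloring the remaining cliques $\bigcup_{j\notin S}K_j$ properly with a fresh palette disjoint from $\{1,\dots,|S|-1\}$; this stays proper across the cut because the two palettes do not overlap. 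Now every clique $K_i$ with $i\in S$ uses only colors from $\{1,\dots,|S|-1\}$, so $\bigl|\bigcup_{i\in S}C_i\bigr|\le|S|-1$, Hall's condition fails for this $S$, no rainbow transversal exists, and hence $G\torr H$ does not hold.

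The points that need a little care — none of them serious — are checking that the extended coloring in the contrapositive is genuinely proper (disjointness of palettes), noting that $|K_i|\le\chi(G[S])\le|S|-1$ for $i\in S$ so the restricted coloring can indeed be realized, and the elementary fact that a proper coloring uses at least $\chi$ colors. The real content, as the title of the theorem advertises, is spotting the bijection between rainbow transversals and systems of distinct representatives; once that is in hand, Hall's theorem supplies both directions.
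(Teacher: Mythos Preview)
Your proof is correct and follows essentially the same route as the paper: both directions reduce the existence of a rainbow transversal to Hall's marriage condition on the color sets $C_i=c(K_i)$, with the forward direction using $\bigl|\bigcup_{i\in S}C_i\bigr|\ge\chi(G[S])\ge|S|$ and the converse done by contraposition via an optimally colored $G[S]$ extended to all of $G$. Your write-up is in fact more careful than the paper's on two points the paper glosses over --- that any transversal automatically induces a copy of $H$, and that the extension in the contrapositive can be made proper by using a disjoint palette on the remaining cliques.
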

\begin{proof}
Let the first condition be satisfied. We take any proper vertex coloring and we will show that there is a rainbow $H$ having one vertex in each replication clique. We define a bipartite graph consisted of the set of vertices of $H$ and the set of colors used in the coloring. A vertex $v$ is connected with all colors used in the replication clique obtained from $v$. Since each subset of $k$ vertices is connected with at least $k$ colors then by Hall's theorem we obtain that there exist matching such that each vertex is matched with unique color which is connected to it. So the matching defines the colors of the vertices in replication cliques and selecting a vertex from each replication clique of these colors gives the rainbow induced subgraph $H$.

Conversely, if for some subset $S$ the chromatic number of the subgraph is less than $|S|$ then we could get a coloring which uses less than $|S|$ colors for these cliques and extend it to the whole graph $G$ obtaining a coloring from which we cannot select rainbow vertices of $H$ from each replication clique.
\end{proof}

\begin{Lemma}[\cite{LO}]
A replication of a perfect graph is perfect.
\end{Lemma}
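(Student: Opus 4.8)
The plan is to reproduce the classical argument of Lov\'asz. A replication of $H$ is obtained from $H$ by a finite sequence of one-vertex duplications (adding a vertex $v'$ joined to $v$ and to every neighbour of $v$), so by induction on the number of steps it suffices to prove the one-step case: \emph{if $G$ is perfect, then the graph $G'$ obtained from $G$ by duplicating a vertex $v$ is perfect.} I would prove this by induction on $|V(G)|$. Given an induced subgraph $F$ of $G'$, one checks $\chi(F)=\omega(F)$ in cases. If $v'\notin F$, then $F$ is an induced subgraph of $G$, so we are done by perfectness of $G$; if $v'\in F$ but $v\notin F$, then $F$ is isomorphic to an induced subgraph of $G$, same conclusion; if $\{v,v'\}\subseteq F$ but $F\neq G'$, then $F$ is a one-vertex duplication of the proper induced subgraph $F-v'$ of $G$, which is perfect and has fewer vertices than $G$, so the inductive hypothesis applies. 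Thus the only genuine case is $F=G'$.

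For $F=G'$ I would argue $\chi(G')=\omega(G')$ directly, splitting according to whether $v$ lies in a maximum clique of $G$. If $v$ lies in a maximum clique $C$ of $G$, then $C\cup\{v'\}$ is a clique of $G'$, so $\omega(G')=\omega(G)+1$; on the other hand, taking an optimal colouring of $G$ and giving $v'$ a fresh colour yields $\chi(G')\leq\chi(G)+1=\omega(G)+1$, hence equality. If $v$ lies in no maximum clique of $G$, then a clique of $G'$ containing $v'$ corresponds, after replacing $v'$ by $v$, to a clique of $G$ through $v$, which has at most $\omega(G)-1$ vertices; so adding $v'$ creates no clique larger than those already in $G$, i.e. $\omega(G')=\omega(G)=:\omega$, and it remains to exhibit a proper $\omega$-colouring of $G'$.

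To build that colouring, fix a proper $\omega$-colouring of $G$ and let $S$ be the colour class containing $v$. The key observation is that every colour class of an optimal colouring meets every maximum clique (a maximum clique has $\omega$ vertices bearing $\omega$ distinct colours, so every colour occurs on it). Since $v$ is in no maximum clique, already $S\setminus\{v\}$ meets every maximum clique of $G$, so $\omega\bigl(G-(S\setminus\{v\})\bigr)\leq\omega-1$; as $G-(S\setminus\{v\})$ is an induced subgraph of the perfect graph $G$, it has a proper colouring with colours $1,\dots,\omega-1$. Finally, $(S\setminus\{v\})\cup\{v'\}$ is independent in $G'$ — the vertices of $S\setminus\{v\}$ are non-neighbours of $v$ and hence of $v'$ — so assigning it the colour $\omega$ extends the colouring to a proper $\omega$-colouring of $G'$. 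With $\chi(G')\geq\omega(G')$ always, this gives $\chi(G')=\omega(G')$ and completes the induction.

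The main obstacle, I expect, is not the inductive scaffolding (the reduction to one-step duplications and the case split over induced subgraphs of $G'$ are routine) but pinpointing the observation that makes the hard case work: that deleting $v$ from its colour class still leaves a transversal of all maximum cliques, which is precisely where the hypothesis ``$v$ is in no maximum clique'' is spent. A less self-contained alternative would be to invoke Lov\'asz's characterisation that $G$ is perfect iff $\alpha(F)\,\omega(F)\geq|V(F)|$ for every induced subgraph $F$, and to verify this inequality for induced subgraphs of a replication, but the colouring argument above seems cleaner.
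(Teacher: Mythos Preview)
Your argument is correct and is precisely Lov\'asz's original proof from \cite{LO}; the paper itself does not supply a proof but merely cites that reference. So there is nothing to compare---you have filled in exactly the argument the citation points to.
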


Thus for perfect graphs we could consider the maximal clique instead of the chromatic number of the subgraph induced by a subset. Paths are perfect and so the replication graphs of them. The maximal clique in the replication graph of path is always formed by two adjacent replication cliques.

%%%%%%%%%%%%%%%%%%%%%%%%%%%%%%%%%%%%%%%%%%%%%%%%%%%%%%%%%%%%%%%%%%%%%%%%%%%%%%%%%
\section{The upper bound for paths}

\begin{Theorem}$$\rho_R(P_n) \le \left\{\begin{array}{ll}
n^2/4+n^2/16+n/2 & \mbox{if $n \equiv 0 \mod 4$}\\
n^2/4+n^2/16+3n/8+5/16 & \mbox{if $n \equiv 1 \mod 4$}\\
n^2/4+n^2/16+n/2-1/4 & \mbox{if $n \equiv 2 \mod 4$}\\
n^2/4+n^2/16+3n/8+1/16 & \mbox{if $n \equiv 3 \mod 4$}\\
\end{array}\right.$$\end{Theorem}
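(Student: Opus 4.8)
The plan is to combine Theorem 2.1 with the perfection lemma and then turn the whole question into a purely numerical one. Encode a replication graph $G$ of $P_n$ by the sizes $a_i=|K_i|\ge 1$ of the cliques replacing the vertices $1,\dots,n$. By Theorem 2.1, $G\torr P_n$ iff the chromatic number of $G[S]$ is at least $|S|$ for every $S\subseteq\{1,\dots,n\}$; since a replication of a path is perfect (Lemma of \cite{LO}) and a maximal clique of such a graph lies inside one clique or inside two consecutive ones, the chromatic number equals the clique number
\[
\omega(G[S])=\max\Bigl(\max_{i\in S}a_i,\ \max_{i,i+1\in S}(a_i+a_{i+1})\Bigr).
\]
Hence it suffices to exhibit, for each $n$, integers $a_1,\dots,a_n\ge 1$ whose sum equals the quantity in the statement and for which $\omega(G[S])\ge|S|$ holds for every $S$; that gives $\rho_R(P_n)\le\sum_i a_i$. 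No optimality is needed, only one good sequence in each residue class.

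It is convenient to dualize by thresholds. For $t\ge 1$ let $B_t$ be the graph on $V_t=\{i:a_i\le t\}$ with $i$ joined to $i+1$ exactly when $i,i+1\in V_t$ and $a_i+a_{i+1}>t$. Then a set $S$ has $\omega(G[S])\le t$ precisely when $S$ is independent in $B_t$; and $B_t$, being a subgraph of $P_n$ with some edges deleted, is a disjoint union of paths, so its independence number is $\alpha(B_t)=\sum_{C}\lceil|C|/2\rceil$ over the components $C$. Therefore $G\torr P_n$ if and only if $\alpha(B_t)\le t$ for every $t$. In particular $\alpha(B_t)\ge\lceil|V_t|/2\rceil$ forces $|V_t|\le 2t$, which only yields the weaker bound $\sum a_i\gtrsim n^2/4$; the task is to choose $(a_i)$ of least sum so that at every level $t$ the ``light'' vertices $V_t$ are cut by the ``light edges'' $a_i+a_{i+1}\le t$ into blocks whose $\lceil|C|/2\rceil$'s total at most $t$.

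For the construction I would take a (weakly) unimodal sequence: value $2$ — or a single $1$ — at the two ends, rising to a peak near the centre where two consecutive entries already sum to at least $n$ (needed for $S=\{1,\dots,n\}$), with increments and peak location tuned so that at each level the light vertices split into as few blocks as possible, almost all of even size; the exact piecewise formula for $a_i$ depends on $n\bmod 4$ for parity reasons. Sample outputs are $P_3\mapsto(2,1,1)$, $P_4\mapsto(2,2,2,1)$, $P_5\mapsto(1,2,3,2,2)$, $P_6\mapsto(2,2,3,3,2,2)$, $P_8\mapsto(2,3,3,3,4,4,3,2)$. One then checks $\sum a_i$ equals the claimed value (a routine summation per residue class) and verifies the threshold condition: for fixed $t$, unimodality makes $V_t$ a union of two end-intervals $I^-,I^+$ (one possibly empty); along each such ramp the sums $a_i+a_{i+1}$ are monotone, so the light edges inside $I^\pm$ form an initial segment at the extreme end; hence $B_t$ restricted to $I^\pm$ is a few isolated light vertices followed by a path, and one computes $\lceil\cdot/2\rceil$ of these pieces, adds over the (at most two) ends, and sees it is $\le t$.

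The main obstacle is exactly this verification, and within it the parity bookkeeping. The necessary bound $|V_t|\le 2t$ is not sufficient because every odd-sized block of light vertices costs an extra $\tfrac12$ in $\alpha(B_t)$; a single fixed profile cannot make all the $B_t$ simultaneously ``even'', and the slack one must build in to steer the unavoidable odd blocks to harmless positions while keeping $\sum a_i$ minimal is precisely what produces the extra $n^2/16$ term and makes the answer depend on $n\bmod 4$. A secondary point to handle with care is the top range $t$ close to $n$, where almost every edge is light and one must check that the peak region alone still keeps $\alpha(B_t)\le t$.
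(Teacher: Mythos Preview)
Your reduction to the numerical condition via Theorem~2.1 and perfection is correct, and the threshold formulation $\alpha(B_t)\le t$ is a clean restatement. The gap is the construction itself: a weakly unimodal profile cannot reach the stated bound once $n$ is past single digits, so the plan fails before the ``parity bookkeeping'' stage is ever reached.

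The obstruction is visible in your own framework. If $(a_i)$ is weakly unimodal then so is the sequence of consecutive sums $a_i+a_{i+1}$, hence at each threshold $t$ the non-edges of $B_t$ form a prefix and a suffix of the edge set. Once $t\ge\max_i a_i$ (so $V_t=\{1,\dots,n\}$), $B_t$ is therefore $s$ isolated end-vertices together with a single central path on $n-s$ vertices, where $s$ counts the non-edges, and $\alpha(B_t)=s+\lceil(n-s)/2\rceil$; the requirement $\alpha(B_t)\le t$ forces $s\le 2t-n$. Taking the peak height to be $n/2$ (the least value compatible with having a consecutive pair of sum $\ge n$), this applies for every $t\ge n/2$; sorting the edge-sums $\sigma_1\ge\cdots\ge\sigma_{n-1}$ one obtains $\sigma_k\ge n-\lfloor k/2\rfloor$, whence
\[
2\sum_i a_i-a_1-a_n=\sum_k\sigma_k\ \ge\ \tfrac{3n^2}{4}-\tfrac{n}{2},
\qquad\text{so}\qquad \sum_i a_i\ \ge\ \tfrac{3n^2}{8}-\tfrac{n}{4}+1.
\]
For $n=8$ this gives $23$, and your sum-$24$ example squeaks by; for $n=12$ it gives $52$, already above the target $51$; for $n=16$ it gives $93$ against the target $88$. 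Asymptotically $3n^2/8$ exceeds $5n^2/16$ by $n^2/16$, so unimodality costs exactly the second-order term you are trying to realise. Pushing the peak above $n/2$ does not recover the loss --- you trade the $t$-constraints in $[n/2,M)$ for extra mass at the top, and the flank constraints for small $t$ are unchanged; direct search at $n=12$ confirms no unimodal profile of total $51$ works. In short, with a unimodal shape the light vertices at every level accumulate in two long end-blocks, and that is precisely what has to be avoided.

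The paper's construction is accordingly far from unimodal. For $n\equiv 0\pmod 4$ it \emph{interleaves} four sequences $S_1=(1,\dots,n/4)$, $T_1=(n/2,\dots,n/2)$, $S_2=(n/4{+}1,\dots,n/2)$, $T_2=(n/4,\dots,n/4,n/2)$ in cyclic order, producing e.g.\ $(1,8,5,4,2,8,6,4,3,8,7,4,4,8,8,8)$ for $n=16$. Every small entry from $S_1$ sits immediately next to an $n/2$ from $T_1$, so at each threshold the light vertices are scattered through the path rather than pooled at the ends; the verification that any $k$-subset contains a value or a consecutive pair of value at least $k$ is then a direct case analysis on the position of $k$ relative to $n/4$, $n/2$, $3n/4$. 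Analogous interleavings handle the other residues mod~$4$.
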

\begin{proof}

We will define a suitable $G$ as a sequence of orders of replication cliques of $P_n$, and show that each $k$-subset of this sequence contains a number at least $k$ or a pair of consecutive numbers in the sequence of sum at least $k$.

\begin{enumerate}
\item Assume that $n \equiv 0 \mod 4$.

Define four sequences of length $n/4$ of numbers:
$$S_1=(1,2,...,n/4)$$
$$T_1=(n/2,n/2,...,n/2)$$
$$S_2=(n/4+1,n/4+2,...,n/2)$$
$$T_2=(n/4,n/4,...,n/4,n/2)$$

Then $G$ is defined by alternately taking the numbers from $S_1,T_1,S_2,T_2$ in that order, so the first number is the first of $S_1$, the second is the first of $T_1$, ..., the fifth is the second of $S_1$ and so on.

For example for $n=16$:
$$S_1=(1,2,3,4)$$
$$T_1=(8,8,8,8)$$
$$S_2=(5,6,7,8)$$
$$T_2=(4,4,4,8)$$
$$S=(1,5,2,6,3,7,4,8)$$
$$T=(8,4,8,4,8,4,8,8)$$
$$G=(1,8,5,4,2,8,6,4,3,8,7,4,4,8,8,8)$$
It could be seen in this way:
$$\begin{array}{llllllllllllllllllll}
S_1&=&(&1& & & &2& & & &3& & & &4& & & &)\\
T_1&=&(& &8& & & &8& & & &8& & & &8& & &)\\
S_2&=&(& & &5& & & &6& & & &7& & & &8& &)\\
T_2&=&(& & & &4& & & &4& & & &4& & & &8&)\\
\\
G  &=&(&1&8&5&4&2&8&6&4&3&8&7&4&4&8&8&8&)
\end{array}$$
    
Consider a $k$-subset of numbers of $G$. By a value of the subset we mean maximum over numbers from the subset or sums of pairs of consecutive numbers in the sequence which are both in the subset. The subset contains $s_1$ numbers from $S_1$, $s_2$ numbers from $S_2$, $t_1$ numbers from $T_1$ and $t_2$ numbers from $T_2$.

Assume for the contrary that the value of the subset is less than $k$. There are the following cases:
\begin{enumerate}
\item If $k \le n/4$ then all numbers in the subset must be from $S_1$, but there are only $k-1$ numbers less than $k$, so this is impossible.

\item If $n/4 < k \le n/2$ there are not numbers from $T_1$.

Consider a set $X_i$ for $1 \le i \le n/4-1$ which contains $i$-th numbers from $S_2$ and $T_2$ and $(i+1)$-th number from $S_1$, but only those which are also in the $k$-subset. In each $X_i$ there could be at most two numbers because there cannot be a pair between $S_2$ and $T_2$. There could be at most one number if $i \ge k-n/4$ because there cannot be a number from $S_2$ and cannot be a pair between $T_2$ and $S_1$. In summary there are at most
$$2*(k-n/4-1)+((n/4-1)-(k-n/4)+1) = k-2$$
numbers in these $X_i$ sets.

Except the first number $1$ of the $S_1$ there could be only numbers from the sets $X_i$. So there is at most $k-1$ numbers in the subset, which is a contradiction.

\item If $n/2 < k \le n/2+n/4$ then we define $X_i$ for each $1 \le i \le n/4$ which contains $i$-th numbers from $S_1$, $T_1$, $S_2$, $T_2$ but only those which are also in the $k$-subset.

Observe that in each $X_i$ there cannot be both numbers from $T_1$ and $S_2$. Also there cannot be both from $S_1$ and $T_1$ or both from $S_2$ and $T_2$ if $i \ge k-n/2$. So for $i < k-n/2$ there can be at most $3$ numbers and for $i \ge k-n/2$ there can be at most $2$ numbers.

In summary in the $X_i$ sets there could be at most $$3*(k-n/2-1)+2*(n/4-(k-n/2)+1) = k-1$$ numbers. But every number from the subset would belong to some $X_i$ so this is a maximal size of the $k$-subset and a contradiction occurs.

\item If $n/2+n/4 < k \le n$ then we observe that there could be at most $k-n/2-n/4-1$ pairs between numbers from $T_1$ and $S_2$. So the maximal size of the $k$-subset is $$3*(n/4)+(k-n/2-n/4-1) = k-1.$$ This a contradiction.
\end{enumerate}

\item Assume that $n \equiv 1 \mod 4$.

We define sequences:
$$S_1=(1,2,...,(n+3)/4) \mbox{ of length $(n+3)/4$} $$
$$T_1=((n+3)/4,(n+3)/4,...,(n+3)/4) \mbox{ of length $(n-1)/4$} $$
$$S_2=((n+3)/4+1,(n+3)/4+2,...,(n+1)/2) \mbox{ of length $(n-1)/4$} $$
$$T_2=((n+1)/2,(n+1)/2,...,(n+1)/2,(n+3)/4) \mbox{ of length $(n-1)/4$} $$
And $G$ in similar way as before.

For example for $n=13$:
$$\begin{array}{lllllllllllllllll}
S_1&=&(&1& & & &2& & & &3& & & &4&)\\
T_1&=&(& &4& & & &4& & & &4& & & &)\\
S_2&=&(& & &5& & & &6& & & &7& & &)\\
T_2&=&(& & & &7& & & &7& & & &4& &)\\
\\
G  &=&(&1&4&5&7&2&4&6&7&3&4&7&4&4&)
\end{array}$$

\item Assume that $n \equiv 2 \mod 4$.

We define sequences:
$$S_1=(1,2,...,(n+2)/4) \mbox{ of length $(n+2)/4$} $$
$$T_1=(n/2,n/2,...,n/2) \mbox{ of length $(n+2)/4$} $$
$$S_2=((n+2)/4+1,(n+2)/4+2,...,n/2) \mbox{ of length $(n-2)/4$} $$
$$T_2=((n+2)/4,(n+2)/4,...,(n+2)/4) \mbox{ of length $(n-2)/4$} $$
And $G$ in similar way as before.

For example for $n=14$:
$$\begin{array}{llllllllllllllllll}
S_1&=&(&1& & & &2& & & &3& & & &4& &)\\
T_1&=&(& &7& & & &7& & & &7& & & &7&)\\
S_2&=&(& & &5& & & &6& & & &7& & & &)\\
T_2&=&(& & & &4& & & &4& & & &4& & &)\\
\\
G  &=&(&1&7&5&4&2&7&6&4&3&7&7&4&4&7&)
\end{array}$$

\item Assume that $n \equiv 3 \mod 4$.

We define sequences:
$$S_1=(1,2,...,(n+1)/4) \mbox{ of length $(n+1)/4$} $$
$$T_1=((n+1)/4,(n+1)/4,...,(n+1)/4) \mbox{ of length $(n+1)/4$} $$
$$S_2=((n+1)/4+1,(n+1)/4+2,...,(n+1)/2) \mbox{ of length $(n+1)/4$} $$
$$T_2=((n+1)/2,(n+1)/2,...,(n+1)/2) \mbox{ of length $(n-3)/4$} $$
And $G$ in similar way as before.

For example for $n=15$:
$$\begin{array}{llllllllllllllllllllll}
S_1&=&(&1& & & &2& & & &3& & & &4& & &)\\
T_1&=&(& &4& & & &4& & & &4& & & &4& &)\\
S_2&=&(& & &5& & & &6& & & &7& & & &8&)\\
T_2&=&(& & & &8& & & &8& & & &8& & & &)\\
\\
G  &=&(&1&4&5&8&2&4&6&8&3&4&7&8&4&4&8&)
\end{array}$$

\end{enumerate}
\end{proof}

%%%%%%%%%%%%%%%%%%%%%%%%%%%%%%%%%%%%%%%%%%%%%%%%%%%%%%%%%%%%%%%%%%%%%%%%%%%%%%%%%
\section{The lower bound for paths}

We will show that the exact value of $\rho_R(P_n)$ is above the simple bound $n^2/4$ by an $O(n^2)$ component.

\begin{Lemma}\label{lm1}Let $G$ be a minimal graph such that $G \torr P_n$, so the order of $G$ is $\rho_R(P_n)$. Let $$e = \sum_{ab} \min\{|A|,|B|\},$$ where each $ab$ is an edge of $P_n$ between vertices $a$ and $b$, $A$ and $B$ are replication cliques in $G$ obtained from the vertices $a$ and $b$. The following holds: $$|G| = \rho_R(P_n) \ge \frac{n(n+1)}{2} - e$$\end{Lemma}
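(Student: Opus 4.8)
The first step is to recast the statement as a purely numerical inequality. Writing $a_i:=|A_i|$ for the order of the replication clique of the $i$-th vertex of $P_n$, we have $|G|=\sum_{i=1}^n a_i$ and $e=\sum_{i=1}^{n-1}\min(a_i,a_{i+1})$, so the lemma is equivalent to
\[
\Phi(a_1,\dots,a_n):=\sum_{i=1}^n a_i+\sum_{i=1}^{n-1}\min(a_i,a_{i+1})\;\ge\;\frac{n(n+1)}{2}.
\]
Since a replication of a perfect graph is perfect, the Hall-type theorem of the previous section turns $G\torr P_n$ into the following condition on $(a_i)$: for every family $R_1,\dots,R_t$ of pairwise non-adjacent subpaths of $P_n$ one has $\sum_j|R_j|\le\max_j w(R_j)$, where $w([j,k])=\max_{j\le i<k}(a_i+a_{i+1})$ if $k>j$ and $w(\{j\})=a_j$; this just says that a largest clique of the subgraph induced by the replication cliques with indices in $\bigcup_j R_j$, which consists of two consecutive cliques (or of one), must have at least $|\bigcup_j R_j|$ vertices.

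Two consequences are worth isolating. Applied to the family of maximal runs of $\{i:a_i\le h\}$ (pairwise non-adjacent, each of $w$-value at most $2h$), the condition gives $|\{i:a_i\le h\}|\le 2h$ for every $h\ge 0$; equivalently $|V_\ell|\ge n-2\ell+2$, where $V_\ell:=\{i:a_i\ge\ell\}$, and hence also $t_\ell\le 2\ell-1$ for the number $t_\ell$ of maximal runs of $V_\ell$. Secondly, both sums in $\Phi$ can be read off level by level: $\sum a_i=\sum_{\ell\ge1}|V_\ell|$ and $\sum\min(a_i,a_{i+1})=\sum_{\ell\ge1}|E(V_\ell)|$, where $E(V_\ell)$ is the set of path-edges with both ends in $V_\ell$, so that $\Phi(a)=\sum_{\ell\ge1}\bigl(2|V_\ell|-t_\ell\bigr)$. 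Together these already yield the simple bound $|G|=\sum_\ell|V_\ell|\ge n^2/4$ mentioned before the lemma.

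I would prove the inequality by induction on $n$ (the case $n=1$ being trivial). The whole path is one run, so some index $p$ has $a_p+a_{p+1}\ge n$; the prefix $(a_1,\dots,a_p)$ and the suffix $(a_{p+1},\dots,a_n)$ satisfy the condition for $P_p$ and $P_{n-p}$ respectively (a non-adjacent family inside a half is one inside the whole), and $\Phi$ splits as $\Phi(a_1,\dots,a_p)+\Phi(a_{p+1},\dots,a_n)+\min(a_p,a_{p+1})$. In the good case $p=1$ with $a_1\ge a_2$ (or, symmetrically, $p=n-1$ with $a_n\ge a_{n-1}$), the induction hypothesis on the suffix gives
\[
\Phi(a)\;\ge\;a_1+\frac{(n-1)n}{2}+a_2\;=\;\frac{(n-1)n}{2}+(a_1+a_2)\;\ge\;\frac{(n-1)n}{2}+n\;=\;\frac{n(n+1)}{2},
\]
so this case is done.

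The main obstacle is the remaining case, where no pair of sum $\ge n$ sits at an end of the path with its larger clique on the outside --- in particular when the only such pair is internal. Splitting naively then falls short of $\frac{n(n+1)}{2}$ by about $p(n-p)$, exactly the weight of the constraints crossing the cut that the split discards. I would close this by strengthening the induction hypothesis to a lower bound $\Psi(m,z)$ on $\Phi$ over configurations realizing $P_m$ whose end clique has order $z$, and then gluing the two halves through the coupling $a_p+a_{p+1}\ge n$; the point to prove is that forcing a clique adjacent to the cut to be large pushes the corresponding half's value of $\Phi$ up by enough to absorb the $p(n-p)$ deficit. A more hands-on alternative runs directly through the level-set identity $\Phi(a)=\sum_\ell(2|V_\ell|-t_\ell)$: the bounds $|V_\ell|\ge n-2\ell+2$ and $t_\ell\le 2\ell-1$ cannot all be tight simultaneously, since a sequence with $|V_\ell|=n-2\ell+2$ for every $\ell$ is a permutation of $(1,1,2,2,\dots)$ and no such sequence satisfies the condition; quantifying this incompatibility is precisely what turns the $n^2/4$ threshold into the claimed $\frac{n(n+1)}{2}-e$. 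That quantitative gain is where I expect the real work to lie.
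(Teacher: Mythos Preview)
Your proposal is not a proof: you handle only the boundary case where a pair with sum $\ge n$ sits at an end of the path with its larger clique on the outside, and you explicitly leave the generic internal case open, writing that ``quantifying this incompatibility is precisely what turns the $n^2/4$ threshold into the claimed $\frac{n(n+1)}{2}-e$'' and that ``that quantitative gain is where I expect the real work to lie.'' Neither of the two routes you suggest (a strengthened induction hypothesis $\Psi(m,z)$, or a sharpened level-set count) is carried out. For the level-set route, note that plugging in your bounds $|V_\ell|\ge n-2\ell+2$ and $t_\ell\le 2\ell-1$ gives $2|V_\ell|-t_\ell\ge 2n-6\ell+5$, and even combining this with $2|V_\ell|-t_\ell\ge|V_\ell|$ for larger $\ell$ sums to something of order $5n^2/16$, still short of $n(n+1)/2$; so substantive new input is needed, not just bookkeeping.

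The paper avoids all of this with a direct reduction to the anticlique $A_n$, using that $\rho_R(A_n)=\frac{n(n+1)}{2}$ (the Hall condition for a replication of an edgeless graph forces the sorted clique sizes to satisfy $s_k\ge k$). Starting from $G$, the paper disconnects the path one vertex at a time: mark the smaller clique at each edge; repeatedly pick an unmarked clique $A$ that still has neighbours, enlarge $A$ by the sum of its neighbours' sizes, delete the edges to those neighbours, and remove one mark from each neighbour. One checks that the Hall condition for the current disjoint union of paths is preserved at each step, and that the total number of vertices added over the whole process is exactly $e$, since each edge contributes one mark on a clique that is never enlarged before that mark is consumed. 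The final graph is a replication of $A_n$ with $|G|+e$ vertices, whence $|G|+e\ge\frac{n(n+1)}{2}$. Your numerical reformulation and the level-set identity $\Phi(a)=\sum_\ell(2|V_\ell|-t_\ell)$ are correct and pleasant, but the paper's argument reaches the conclusion without induction, splitting, or any quantitative incompatibility analysis.
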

\begin{proof}
We show that we could obtain from $G$ a graph $G'$ which is a replication graph of $A_n$ such that $G' \torr A_n$ by adding exactly $e$ vertices.

At first for each edge $ab$ in $P_n$ let us mark the smaller replication clique $A$ or $B$ (or any of them if they are equal). The cliques can be marked marked twice (by two edges), once (by one edge) or left unmarked. Note that we have one more vertex than the number of edges, so at least one clique is unmarked. If a clique is unmarked then its neighbor cliques must be marked.

For a graph $H$ which is a replication graph of a disjoint union of paths $U$, such that $G'' \torr U$, we define the procedure: Get a clique $A$ which is unmarked and which is connected with one or two neighbor cliques. Then increase number of vertices in $A$ by the sum of orders of the neighbor cliques, and remove connections between $A$ and them. Also we remove one mark from each of the neighbor cliques.

The obtained graph $H'$ is a replication graph of $U'$ which is $U$ without one or two edges. There is still an unmarked clique in $U'$ having a neighbor or $U'$ is just $A_n$. We show that $H' \torr U'$. Get any $k$-subset of $U'$. If the subset does not contain $A$ then it has the same chromatic number in $H'$ as in $H$, so it has at least $k$. If the subset contains $A$ then assume that it has a smaller chromatic number than $k$ in $H'$. With the fact in $H$ the subset has the chromatic number at least $k$ it must come from an induced clique between $A$ and one of its neighbor clique. But in $H'$ the clique $A'$ is at least as large as the induced clique and so $k$.

We use the defined procedure for $G$ and repeat it until we obtain a replication graph of $A_n$. For each mark on a clique $A$ we have added exactly $|A|$ vertices during the process, because we have not added vertices to marked cliques. So by a way in which we marked the cliques we have added exactly $e$ vertices. The order of the result graph must be at least $\rho_R(A_n) = \frac{n(n+1)}{2}$ so the lemma holds.
\end{proof}

%%%%%%%%%%%%%%%%%%%%%%%%%%%%%%%%%%%%%%%%%%%%%%%%%%%%%%%%%%%%%%%%%%%%%%%%%%%%%%%%%

\begin{Lemma}\label{lm2}Let $G$ be a replication graph of $P_n$ such that $G \torr P_n$ and $2|n$. Let $c,x$ be any numbers from $[0,1/2]$. If $$|G| \le (n/2)(n/2+1)+\frac{cn(cn-1)}{4}$$ then the following holds:
\begin{itemize}
\item Let $X$ be a subset of cliques which have orders at most $xn$. Then $$2xn-cn \le |X| \le 2xn.$$
\item Let $X'$ be a subset of cliques which have orders at least $xn$. Then $$n-2xn \le |X'| \le n-2xn+cn.$$
\item Let $X''$ be a subset of cliques which have orders at least $xn$ and at most $yn$. Then $$|X''| \ge 2n(y-x)-cn.$$
\end{itemize}
\end{Lemma}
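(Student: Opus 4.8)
The plan is to recast everything in terms of the multiset of orders of the replication cliques of $G$; write these orders, sorted non-decreasingly, as $b_1\le b_2\le\dots\le b_n$. Since $P_n$ is perfect, $G$ is perfect (a replication of $P_n$, by the Lemma cited from~\cite{LO}), hence so is every induced subgraph $G[S]$, so the Hall-type theorem of Section~2 becomes: $\omega(G[S])\ge|S|$ for every $S\subseteq V(P_n)$. A clique of $G$ lies in the union of at most two replication cliques, and these must correspond to adjacent vertices of $P_n$; taking $S$ to be the $k$ replication cliques of smallest order therefore gives $\omega(G[S])\le 2b_k$, so $b_k\ge\lceil k/2\rceil$ for every $k$. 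Summing and using $2\mid n$ yields $|G|=\sum_k b_k\ge\sum_{k=1}^n\lceil k/2\rceil=(n/2)(n/2+1)$. Consequently the hypothesis on $|G|$ is equivalent to the statement that the non-negative integer total deficiency $\Delta:=\sum_{k=1}^n\bigl(b_k-\lceil k/2\rceil\bigr)$ satisfies $\Delta\le D:=\tfrac{1}{4}cn(cn-1)$.

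Introduce $N_{\le t}=\#\{i:a_i\le t\}$ and $N_{<t}=\#\{i:a_i<t\}$, so that $|X|=N_{\le xn}$, $|X'|=n-N_{<xn}$, and $|X''|=N_{\le yn}-N_{<xn}$. The upper bounds are immediate from $b_k\ge\lceil k/2\rceil$: if $b_k\le t$ then $\lceil k/2\rceil\le t$, hence $k\le 2t$, so $N_{\le t}\le 2t$; this gives $|X|\le 2xn$, and, together with $N_{<xn}\le N_{\le xn}$, it gives $|X'|\ge n-2xn$.

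The core is the matching lower bound $N_{\le t}\ge 2t-cn$, which I would prove for $t=xn$ and $t=yn$ (and, for $|X'|$, the same bound for $N_{<xn}$). Suppose $N_{\le t}<2t-cn$. Since the $b_k$ are non-decreasing, every index $k$ in the window $\lceil 2t-cn\rceil\le k\le\lfloor 2t\rfloor$ — roughly $cn$ consecutive indices — satisfies $b_k>t$, while $\lceil k/2\rceil$ has already dropped to about $t-(2t-k)/2$. Writing $k=\lfloor 2t\rfloor-j$ and using integrality, the excess is $b_k-\lceil k/2\rceil\ge 1+\lfloor j/2\rfloor$, so summing over $j=0,1,\dots\approx cn$ contributes at least $\tfrac{1}{4}(cn)^2+\Omega(cn)$ to $\Delta$ — strictly more than $D=\tfrac14\bigl((cn)^2-cn\bigr)$, a contradiction. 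The same argument with the weaker ``$b_k\ge t$'' gives $N_{<t}\ge 2t-cn$. Hence $|X|=N_{\le xn}\ge 2xn-cn$; $|X'|=n-N_{<xn}\le n-(2xn-cn)=n-2xn+cn$; and $|X''|=N_{\le yn}-N_{<xn}\ge(2yn-cn)-N_{\le xn}\ge(2yn-cn)-2xn=2n(y-x)-cn$.

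I expect the only genuine difficulty to be quantitative: choosing the counting window so that the forced excesses $\sum(1+\lfloor j/2\rfloor)$ (resp.\ $\sum\lfloor j/2\rfloor$) provably beat $\tfrac{1}{4}cn(cn-1)$ exactly, which forces one to carry the floor/ceiling corrections coming from $\lceil k/2\rceil$ and from $2t-cn$, $2t$, $xn$ not necessarily being integers. (When $c=0$ the hypothesis pins $b_k=\lceil k/2\rceil$ exactly and the stated bounds may be off by an additive constant; this, like the few small-$n$ instances where the hypothesis is vacuous, is immaterial for the intended $\Theta(n^2)$ lower bound on $\rho_R(P_n)$.)
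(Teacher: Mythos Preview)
Your proposal is correct and follows essentially the same route as the paper: sort the replication-clique orders, use perfection of $G$ together with the Hall-type condition to get $b_k\ge\lceil k/2\rceil$, convert the size hypothesis into a bound on the total ``extra vertices'' $\Delta$, and derive the lower bound on $|X|$ by showing that a violation forces a window of roughly $cn$ consecutive indices to contribute more than $\frac{cn(cn-1)}{4}$ extra vertices; the bounds on $|X'|$ and $|X''|$ then follow by complementation exactly as you indicate. Your explicit acknowledgement of the floor/ceiling bookkeeping is, if anything, more careful than the paper's own arithmetic in this step.
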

\begin{proof}
Assume that $|G| < (n/2)(n/2+1)+\frac{cn(cn-1)}{4}$. If we sort the replication cliques ascending by the order and consider $i$-th replication clique then it must have order at least $\lceil i/2\rceil$. Otherwise if we get $i$-subset of the cliques smaller than $\lceil i/2\rceil$ then its the largest induced clique and so the chromatic number will be at most $(\lceil i/2\rceil-1)*2 \le i-1$.

If an $i$-th replication clique of order $k$ has more than $\lceil i/2\rceil$ vertices then we say that is has $k-i/2$ extra vertices. So we have exactly $(n/2)(n/2+1)$ non-extra vertices in $G$.

Consider a subset $X$ of cliques which have orders at most $xn$. The clique of order not larger than $xn$ can be at most $2xn$-th clique in our order, so $|X| \le 2xn$. If $|X| < 2xn-cn$ then the $i$-th cliques where $i = 2xn-cn,2xn-cn+1,...,2xn-1$ have at 
least $xn-\lceil(2xn-cn)/2\rceil,xn-\lceil(2xn-cn+1)/2\rceil,...,xn-\lceil(2xn-1)/2\rceil$ extra vertices respectively. So they have at least $cn/2-1,cn/2-2,...,0$ extra vertices respectively. In summary we have at least $(cn/2-1)(cn/2)/2 \ge \frac{cn(cn-1)}{4}$ extra vertices and it contradicts the assumption, so we are done in the first case.

Consider a subset $X'$ of cliques which have orders at least $xn$. So $|X'| \ge n-|X|$. By our bounds $|X'| \ge n-2xn$ and $|X'| \le n-(2xn-cn) = n-2xn+cn$. So we are done in the second case.

Consider a subset $X''$ of cliques which have orders at least $xn$ and at most $yn$. Then $|X''| \ge |X'|-|Y'|$ where $X'$ contains cliques of orders at least $xn$ and $Y'$ contains cliques of orders at least $yn$. So $|X''| \ge n-2xn-(n-2yn+cn) = 2n(y-x)-cn$ and we are done in the third case.
\end{proof}

%%%%%%%%%%%%%%%%%%%%%%%%%%%%%%%%%%%%%%%%%%%%%%%%%%%%%%%%%%%%%%%%%%%%%%%%%%%%%%%%%

\begin{Theorem}\label{th1}Let $G$ be a replication graph of $P_n$ such that $G \torr P_n$ and $2|n$. Let $a',a,b,c,d$ be the numbers from $[0,1/2]$ such that:
\begin{enumerate}
\item $b < a < a'$.
\item $a+\frac{3}{2}c+d \le a'$.
\item $a+\frac{3}{2}c+d \le 2b$.
\end{enumerate}
The following holds:
$$|G| \ge (n/2)(n/2+1)+\min\left\{\frac{cn(cn-1)}{4},\frac{d(a-b)}{4}n^2\right\}.$$
\end{Theorem}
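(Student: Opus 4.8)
The plan is to argue by contradiction: suppose $|G| < (n/2)(n/2+1) + \min\{\tfrac{cn(cn-1)}{4},\ \tfrac{d(a-b)}{4}n^2\}$. In particular $|G| < (n/2)(n/2+1) + \tfrac{cn(cn-1)}{4}$, so Lemma~\ref{lm2} applies; feeding its three bullets with $x$ equal to $b$, to $a$, and to $a'$, and its third bullet with $(x,y)=(b,a)$, we get: at least $2bn-cn$ replication cliques of order $\le bn$, at least $2(a-b)n-cn$ cliques of order in $[bn,an]$, and at most $n-2a'n+cn$ cliques of order $\ge a'n$. We also get the sharper fact that the number of ``extra'' vertices of $G$ (that is, $|G|-(n/2)(n/2+1)$, in the notation of the proof of Lemma~\ref{lm2}) is less than $\tfrac{d(a-b)}{4}n^2$. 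By the Hall-type theorem above and the fact that replication graphs of paths and their induced subgraphs are perfect, $G\torr P_n$ is equivalent to: for every set $S$ of replication cliques, the order of the largest clique of $G[S]$ is at least $|S|$; and that largest clique order is just the maximum of the orders of the cliques in $S$ and of the sums $|A|+|B|$ over path-adjacent pairs $A,B$ both in $S$. So it suffices to produce one set $S$ with (largest clique order of $G[S]$) $< |S|$.

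I would build $S$ out of the abundant small cliques. Fix a target $k$ of size about $an$. A set $S$ with $|S|=k$ is ``bad'' exactly when every clique meeting $S$ has order $<k$ and every path-adjacent pair inside $S$ has order-sum $<k$; two cliques of order $<k/2$ are always jointly safe, so the sole obstruction is the ``heavy band'' of cliques of order in $[k/2,k)$. Encode this by the graph $H_k$ on the cliques of order $<k$ whose edges are the path-adjacent pairs of order-sum $\ge k$: as a subgraph of $P_n$ it is a disjoint union of paths, every edge of it meets the heavy band, and a bad set of size $k$ exists iff $\alpha(H_k)\ge k$, where $\alpha(H_k)=L_k-\nu(H_k)$, $L_k$ being the number of cliques of order $<k$ and $\nu(H_k)$ its largest-matching size.

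The core of the argument is a case analysis on how the heavy band sits along $P_n$ relative to the small cliques, showing that one of a short list of ``bad patterns'' must occur. Roughly: if the heavy-band cliques are spread out along $P_n$, the many cliques of order $<k/2$ are mostly isolated vertices of $H_k$ and a large independent set of $H_k$ is forced; if they are bunched into a short sub-path, then either a large safe independent set survives away from that sub-path, or one takes $S$ to be a long run of low cliques with a small vertex cover of its heavy edges deleted. Conditions (1)--(3) are exactly the numerical slack that makes one of these patterns always give $\alpha(H_k)\ge k$: condition (3), $2b\ge a+\tfrac32c+d$, provides enough small cliques; condition (2), $a'\ge a+\tfrac32c+d$, prevents the heavy band from absorbing them (this is where $a'$ and the quantity $\tfrac{d(a-b)}{4}n^2$ come in, via the sharper extra-vertex bound forcing the size distribution close to the ideal staircase); and condition (1) orders the thresholds. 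Any such pattern contradicts $G\torr P_n$, so the assumed inequality cannot hold.

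The main obstacle is precisely this case analysis: pinning down the finitely many obstructing patterns, calibrating the cutoff between ``spread out'' and ``bunched'' so it scales like $dn$, and checking in every pattern that $\alpha(H_k)$ actually reaches $k$. All of this rests on careful bookkeeping of the orders $g_1,\dots,g_n$ both along $P_n$ and sorted, and that bookkeeping is where the three inequalities relating $a',a,b,c,d$ are used up.
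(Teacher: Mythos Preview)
Your proposal is a plan, not a proof, and you flag the gap yourself: the ``spread out versus bunched'' case analysis on the heavy band is left entirely undone. The paper avoids any such positional analysis by a single clean dichotomy that hinges on two ingredients you never use: Lemma~\ref{lm1} and the set $M_N$. Here $M$ is the set of replication cliques with orders in $[an,a'n]$ (not $[bn,an]$ as you wrote), and $M_N$ is the set of cliques of order at most $bn$ that are path-adjacent to some clique of $M$.

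If $|M_N|\ge dn$, the paper does \emph{not} try to build a bad set. It invokes Lemma~\ref{lm1}: $|G|\ge \tfrac{n(n+1)}{2}-e$ with $e=\sum_{\text{edges}}\min\{|A|,|B|\}$. Summing the left endpoint over all edges gives $e<|G|$; each clique of $M_N$ sits on an edge where the smaller side is $\le bn$ and the larger side is $\ge an$, and a left/right parity argument yields $e\le |G|-|M_N|\tfrac{(a-b)}{2}n\le |G|-\tfrac{d(a-b)}{2}n^2$. Solving gives the $\tfrac{d(a-b)}{4}n^2$ term directly. So this term is \emph{not} an extra-vertex constraint that ``forces the size distribution close to the staircase''; it comes from Lemma~\ref{lm1}, which you never mention. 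If instead $|M_N|<dn$, a single explicit bad set works with no case analysis: take $X=(N\setminus M_N)\cup M_0$, where $N$ is the set of cliques of order $\le bn$ and $M_0$ is an independent (in $P_n$) half of $M$. Lemma~\ref{lm2} gives $|X|>Q:=(2b+(a'-a)-\tfrac32 c-d)n$. Any adjacent pair inside $X$ must lie in $N\setminus M_N$ (the $M_0$ part is independent and, by definition of $M_N$, nothing in $N\setminus M_N$ is adjacent to $M$), so the largest clique of $G[X]$ has order at most $\max\{a'n,\,2bn\}$; conditions~(2) and~(3) say exactly that $a'n\le Q$ and $2bn\le Q$, contradicting $G\torr P_n$. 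The parameter $d$ is the threshold on $|M_N|$, not a calibration of a geometric ``bunched'' cutoff.
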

\begin{proof}
Get a subset $M$ of cliques of orders in $[an,a'n]$. Let $N$ be a subset of cliques of orders at most $bn$. Let $M_N$ be a subset of cliques of orders at most $bn$ which are also neighbors of some clique from $M$.

We could bound $e$ defined in \ref{lm1}:
$$e \le \rho_R(P_n) - |M_N|\frac{a-b}{2}n.$$
If for each edge we take the order of the clique obtained from the left vertex into $e$ then $e < |G|$. Now for these edges which have the left vertex producing clique from $M_N$ and right vertex producing clique from $N$ we could take the order of the right clique instead of the left. For each such edge we get the number of vertices smaller at least by $(a-b)n$. At least half of the cliques from $M_N$ have a neighbor from $M$ on the left side or on the right side. If it is the first case then we have at least $|M_N|/2$ such edges, and in the second case we could inverse the argument to the right side. So we have decreased $e$ from a value less than $|G|$ at least by $|M_N|\frac{a-b}{2}n$ obtaining the upper bound for $e$ in this way.

If $|M_N| \ge dn$ then
$$e \le \rho_R(P_n) - dn\frac{a-b}{2}n$$
and so by \ref{lm1}
$$\rho_R(P_n) \ge \frac{n(n+1)}{2} - (\rho_R(P_n) - dn\frac{a-b}{2}n)$$
$$\rho_R(P_n) \ge \frac{n(n+1)}{4} + \frac{d(a-b)}{4}n^2$$
and the theorem holds.

So assume now that $|M_N| < dn$. Consider a subset $X$ which consists of the cliques from $N$ without the cliques from $M_N$ and with at least half of the cliques from $M$ which are not adjacent in $M$.

By $\ref{lm2}$ either the theorem holds or $|M| \ge 2n(a'-a)-cn$ and $|N| \ge 2bn-cn$. We know $|X| \ge |N|-|M_N|+|M|/2$ and so
$$|X| > (2bn-cn) -dn +\frac{2n(a'-a)-cn}{2}$$
$$|X| > (2b+(a'-a)-\frac{3}{2}c-d)n = Q$$

We will show that the subset $X$ has the chromatic number at most $Q$. At first we need to show that cliques from $M$ cannot be larger that $Q$. We know that a maximal clique in $M$ and so in $X$ can have $a'n$ vertices. So we need to show that:
$$(2b+(a'-a)-\frac{3}{2}c-d)n \ge a'n$$
and this is equivalent of
$$2b \ge a+\frac{3}{2}c+d$$
which comes from assumption (3).

Secondary we need to show that $X$ cannot have an induced clique of order greater than $Q$. Because only cliques which come from $N-M_N$ can be connected the maximal induced clique here can be of order at most $2bn$. So it is sufficient to show:
$$(2b+(a'-a)-\frac{3}{2}c-d)n \ge 2bn$$
and this is equivalent of
$$a' \ge a + \frac{3}{2}c+d$$
which comes from assumption (2).

So the subset $X$ has the chromatic number less at most $Q < |X|$ which contradicts that $G \torr P_n$, and so the theorem holds.
\end{proof}

%%%%%%%%%%%%%%%%%%%%%%%%%%%%%%%%%%%%%%%%%%%%%%%%%%%%%%%%%%%%%%%%%%%%%%%%%%%%%%%%%

% maximize d(a-b)/4 on (c>=1/14 and a+3c/2+d <= 1/2 and a+3c/2+d <= 2b and b >= 0 and a > b and d > 0 and d < 2b)
\begin{Corollary}$$\rho_R(P_n) \ge n^2/4+n^2/784+n/2.$$\end{Corollary}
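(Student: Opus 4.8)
The goal is to derive the bound $\rho_R(P_n) \ge n^2/4 + n^2/784 + n/2$ as a corollary of Theorem~\ref{th1} by making a concrete choice of the parameters $a', a, b, c, d \in [0,1/2]$ satisfying constraints (1)--(3), and then balancing the two terms in the minimum so that the resulting guaranteed surplus over $(n/2)(n/2+1)$ is maximized (at least up to the $1/784$ constant).

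\begin{proof}
Apply Theorem~\ref{th1}. We must choose $a',a,b,c,d\in[0,1/2]$ with $b<a<a'$, $a+\tfrac32 c+d\le a'$, and $a+\tfrac32 c+d\le 2b$. Under these constraints Theorem~\ref{th1} gives
$$|G|\ge (n/2)(n/2+1)+\min\left\{\frac{cn(cn-1)}{4},\ \frac{d(a-b)}{4}n^2\right\}.$$
We want both terms inside the minimum to be $\ge n^2/784$ (absorbing the linear error into the $n/2$ already present, since $(n/2)(n/2+1)=n^2/4+n/2$). For the first term, $\frac{cn(cn-1)}{4}\ge \frac{c^2n^2}{4}-\frac{cn}{4}$, so it suffices to take $c^2/4 \ge 1/784$, i.e.\ $c\ge 1/14$; the linear defect $-cn/4$ is dominated by slack and does not affect the stated inequality once we observe $n^2/784 + n/2 - cn/4 \le$ (first term) is what is actually needed, which holds for all $n\ge 1$ when $c=1/14$ because then $c^2n^2/4 = n^2/784$ exactly and $-cn/4<0<n/2$. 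For the second term we need $\frac{d(a-b)}{4}\ge \frac1{784}$, i.e.\ $d(a-b)\ge 1/196$.

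So set $c=1/14$. Then $\tfrac32 c = 3/28$, and constraints (2)--(3) become $a+3/28+d\le a'$ and $a+3/28+d\le 2b$, with $b<a<a'\le 1/2$. To get $d(a-b)$ as large as possible while respecting $a+3/28+d\le 2b$ (equivalently $a-b \le b - 3/28 - d$) and $a'\le 1/2$ (so $a \le 1/2 - 3/28 - d$), a symmetric balanced choice is natural. Take $a' = 1/2$, and pick $a,b,d$ so that both (2) and (3) are tight: $a' - a = 3/28 + d$ gives $a = 1/2 - 3/28 - d = 11/28 - d$, and $2b = a + 3/28 + d = 1/2$ gives $b = 1/4$. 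Then $a - b = 11/28 - d - 1/4 = 1/7 - d$, so $d(a-b) = d(1/7 - d)$, maximized at $d = 1/14$ with value $1/196$. With these values one checks $b = 1/4 < a = 11/28 - 1/14 = 9/28 < a' = 1/2$, confirming (1), and all parameters lie in $[0,1/2]$. Both terms in the minimum are then at least $n^2/784$, and adding $(n/2)(n/2+1) = n^2/4 + n/2$ yields $\rho_R(P_n) \ge n^2/4 + n^2/784 + n/2$, as claimed (the subtracted linear terms $-cn/4$ in the first branch are negative and thus harmless against the $+n/2$).
\end{proof}

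The only real subtlety, and the step I would double-check most carefully, is the bookkeeping of the lower-order (linear in $n$) terms: Theorem~\ref{th1}'s first branch is $\frac{cn(cn-1)}{4} = \frac{c^2 n^2}{4} - \frac{cn}{4}$, which with $c = 1/14$ is exactly $\frac{n^2}{784} - \frac{n}{28}$, so one must verify that $(n/2)(n/2+1) + \frac{n^2}{784} - \frac{n}{28} \ge n^2/4 + n^2/784 + n/2$, i.e.\ that $n/2 - n/28 \ge n/2$ — which is false, so in fact the corollary as literally stated needs the min's first branch handled by noting $\frac{cn(cn-1)}{4}\ge \frac{n^2}{784}+\frac{n}{2}$ fails and instead one should either use a slightly smaller coefficient than $1/784$ or, more likely, the intended reading is that the $1/784$ coefficient already leaves enough slack in the second branch and the first branch governs only via its quadratic part; I would resolve this by choosing $c$ marginally above $1/14$ so that $\frac{c^2n^2}{4} - \frac{cn}{4} \ge \frac{n^2}{784} + \frac{n}{2}$ for all relevant $n$, which costs nothing asymptotically. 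Everything else is routine substitution into the already-proved Theorem~\ref{th1}.
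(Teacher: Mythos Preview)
Your approach is essentially identical to the paper's: the paper also applies Theorem~\ref{th1} with exactly the parameters $a'=1/2$, $a=1/4+1/14=9/28$, $b=1/4$, $c=d=1/14$, verifies conditions (1)--(3), and computes the two branches of the minimum as $\frac{cn(cn-1)}{4}=n^2/784-n/56$ and $\frac{d(a-b)}{4}n^2=n^2/784$.

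Your worry about the lower-order term is legitimate, and it is worth noting that the paper's own proof has the same gap: it concludes only with
\[
\rho_R(P_n)\ \ge\ (n/2)(n/2+1)+n^2/784-n/56,
\]
which falls $n/56$ short of the corollary as literally stated. So your instinct to either perturb $c$ slightly upward or to read the claim as an $n^2/784$ asymptotic with an $O(n)$ error is exactly the right reaction; the paper simply does not address this discrepancy. One small arithmetic slip in your last paragraph: with $c=1/14$ one has $cn/4=n/56$, not $n/28$.
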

\begin{proof}
Let $a' = 1/2$, $a = 1/4+1/14$, $b = 1/4$, $c = 1/14$, $d = 1/14$.
The condition (1) in \ref{th1} is clearly satisfied. Since $\frac{3}{2}c+d = 5/28$ the condition (2) $1/4+1/14+5/28 = 1/2 \le 1/2$ is satisfied and also the condition (3) $1/4+1/4+5/28 = 1/2 \le 2*1/4 = 1/2$ is satisfied. So:
$$\frac{cn(cn-1)}{4} = n^2/784 - n/56,$$
$$\frac{d(a-b)}{4}n^2 = n^2/784.$$
And by \ref{th1} we have finally $$\rho_R(P_n) \ge (n/2)(n/2+1)+n^2/784-n/56.$$
\end{proof}

%%%%%%%%%%%%%%%%%%%%%%%%%%%%%%%%%%%%%%%%%%%%%%%%%%%%%%%%%%%%%%%%%%%%%%%%%%%%%%%%%
\section{The exact value of $\rho_R$ for a double star}

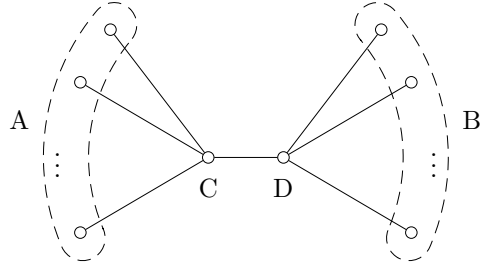
\begin{figure*}
\begin{center}
\begin{pspicture}(5,4)(0,0)
\psset{unit=1cm,linewidth=0.2pt,radius=0.08cm,labelsep=0.2cm}
\Cnode(0.7,3.7){a3}\Cnode(4.3,3.7){b3}
\Cnode(0.3,3){a2}\Cnode(4.7,3){b2}
\rput(0,2){\vdots}\Cnode(2,2){c}\Cnode(3,2){d}\rput(5,2){\vdots}
\Cnode(0.3,1){a1}\Cnode(4.7,1){b1}
\ncline{a3}{c}\ncline{a2}{c}\ncline{a1}{c}
\ncline{b3}{d}\ncline{b2}{d}\ncline{b1}{d}
\ncline{c}{d}
\nput{270}{c}{C}\nput{270}{d}{D}
\ncarcbox[linestyle=dashed,arcangle=30,boxsize=0.3,nodesep=0.3,linearc=0.3]{a3}{a1}\nbput{A}
\ncarcbox[linestyle=dashed,arcangle=30,boxsize=0.3,nodesep=0.3,linearc=0.3]{b1}{b3}\nbput{B}
\end{pspicture}
\caption{A double star $S(a,b)$ with anticliques $A$ and $B$, and central vertices $C$ and $D$.}
\end{center}
\end{figure*}

Let $S(a,b)$ be a double star of order $n = a+b+2$ with two anticliques $A$ of order $a$ and $B$ of order $b$, and two central vertices $C$ connected with $A$ and $D$ connected with $B$.

Our bounds yield:
$$\frac{(n-2)(n-1)}{2}+2 \le \rho(S(a,b)) \le \frac{(n-2)(n-1)}{2}+a+b+2.$$

\begin{Theorem}Assume that $b \ge a \ge 1$. The following holds:
$$\rho_R(S(a,b)) = \frac{(n-2)(n-1)}{2} + \left\lceil \frac{b}{a+1} \right\rceil + 3.$$\end{Theorem}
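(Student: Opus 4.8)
The plan is to prove matching upper and lower bounds for $\rho_R(S(a,b))$ using the Hall-type criterion (Theorem~2.1): a replication graph $G$ of $S(a,b)$ satisfies $G \torr S(a,b)$ iff for every subset $S$ of the $n$ vertices, the chromatic number of $G[S]$ is at least $|S|$. Since $S(a,b)$ is a forest, hence perfect, its replication graphs are perfect (Lemma~2.2), so ``chromatic number of $G[S]$'' may be replaced by ``size of the largest clique in $G[S]$''. In a replication graph of $S(a,b)$ the only edges are within the clique replacing $C$, within the clique replacing $D$, between $C$ and $D$, between $C$ and the cliques replacing the leaves in $A$, and between $D$ and the cliques replacing the leaves in $B$. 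So a clique in $G[S]$ is either a single replication clique, or the union of two adjacent ones; the candidate ``large cliques'' are $|C|+|D|$, $|C|+|A_i|$ for a leaf clique $A_i$, $|D|+|B_j|$, or a single clique. Throughout write $|C|=c$, $|D|=d$, and let the leaf cliques in $A$ have orders $\alpha_1,\dots,\alpha_a$ and those in $B$ have orders $\beta_1,\dots,\beta_b$.

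**The upper bound (construction).** First I would exhibit a replication graph of the claimed order that works. Intuitively the ``hard'' subsets $S$ are those that avoid $C$ and $D$ and consist only of leaves: then $G[S]$ has clique number $1$, forcing each leaf clique to have order $\ge 1$ — wait, that only gives $\max \alpha_i \ge |S|$ for an all-$A$ independent set since the leaves in $A$ are pairwise non-adjacent; so in fact for $S$ equal to all $a$ leaves of $A$ we need some $\alpha_i \ge a$, and more generally any $k$ leaves of $A$ force some clique of size $\ge k$ among them. Combined with the analogous condition for $B$ and the mixed conditions involving $c,d$, the base graph $A_n$ already needs $\approx \binom{n-1}{2}$ vertices for the leaves (reproducing the $\rho_R$ of an anticlique), and the star edges let us save. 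I would take $c$ and $d$ small (say $c=d=1$, or a constant forced by the mixed subsets $S = A \cup \{C\}$, $S = B\cup\{D\}$, $S=\{C,D\}$), arrange the $\alpha_i$ and $\beta_j$ as $1,2,\dots$ up to roughly $a$ and $b$ respectively, and then carefully pad: the term $\lceil b/(a+1)\rceil$ should come from the subset consisting of all of $B$ together with a few of the largest $A$-leaves, which forces the largest $B$-leaf (or some $B$-leaf) to be large enough that, together with $|D|$, it covers $|B|+(\text{a few})+|D|$. The ``$+3$'' absorbs $|C|+|D|$ plus one extra. I would then verify the Hall condition for every subset type by the clique-number computation above.

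**The lower bound.** Conversely, take any $G$ with $G\torr S(a,b)$ of minimum order. By the Hall/clique criterion applied to the all-$A$-leaves subset and its sub-subsets, the multiset $\{\alpha_i\}$ dominates $\{1,2,\dots,a\}$ after sorting (else some $k$-subset of $A$-leaves has clique number $<k$), so $\sum \alpha_i \ge \binom{a+1}{2}$, and likewise $\sum\beta_j \ge \binom{b+1}{2}$; also $c,d\ge 1$ and $c+d\ge 2$ (the subset $\{C,D\}$). This already yields $|G| \ge \binom{a+1}{2} + \binom{b+1}{2} + 2$, which one checks equals $\binom{n-2}{2}+\cdots$ up to lower-order terms; the extra $\lceil b/(a+1)\rceil$ must be squeezed out by a more refined argument. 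The key extra constraint: for each $0\le j\le a$, the subset consisting of all $b$ leaves of $B$, the $j$ largest leaves of $A$, and the vertex $D$ has size $b+j+1$ and its clique number is $\max\{\,d + \max_i\beta_i,\ \text{sum of two largest among the }b+j\text{ leaves present}\,\}$; pushing this over all $j$ (and symmetric/mixed variants, plus the subsets using $C$) forces $\sum \alpha_i + \sum\beta_j + c + d$ to exceed $\binom{a+1}{2}+\binom{b+1}{2}+2$ by at least $\lceil b/(a+1)\rceil + 1$. I would set up these inequalities, sum or optimize them, and conclude.

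**Main obstacle.** The routine part is the clique-number bookkeeping and the $A_n$-type counting. The genuinely delicate step is isolating exactly where the $\lceil b/(a+1)\rceil$ comes from — i.e.\ proving the lower bound is \emph{tight} rather than just $\Omega$ of the right order. This requires choosing the sharpest family of subsets $S$ (balancing how many $A$-leaves to include against how large the forced $B$-clique must be, which is where the ceiling and the ``$a+1$'' enter), and simultaneously making the construction match it exactly, with no slack; getting the two $\pm 1$-type constants ($+3$ vs.\ $+2$, and the ceiling rounding) to agree on both sides is where the care is needed.
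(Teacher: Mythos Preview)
Your setup via the Hall-type criterion and the perfect-graph reduction to clique numbers is exactly right, but the proposal has a genuine structural gap that breaks both the upper and the lower bound.

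The key point you are missing is that in $S(a,b)$ the set $A\cup B$ of \emph{all} leaves is a single anticlique of size $a+b=n-2$ (leaves in $A$ are not adjacent to leaves in $B$). Hence for any $k$-subset of leaf cliques, the clique number of $G[S]$ is simply the maximum order among the chosen leaf cliques. This forces the sorted multiset of \emph{all} $n-2$ leaf-clique orders to dominate $1,2,\dots,n-2$, giving $\sum\alpha_i+\sum\beta_j\ge \tfrac{(n-2)(n-1)}{2}$ directly. Your estimate $\sum\alpha_i\ge\binom{a+1}{2}$ and $\sum\beta_j\ge\binom{b+1}{2}$ treats $A$ and $B$ separately and is weaker by exactly $ab$, which is \emph{not} lower order (it can be as large as $(n-2)^2/4$); so the claim that your preliminary bound ``equals $\binom{n-2}{2}+\cdots$ up to lower-order terms'' is false. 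The same oversight invalidates your construction: taking $\alpha_i=i$ and $\beta_j=j$ separately, the full leaf set (size $a+b$) has clique number only $\max(a,b)=b<a+b$, so $G\torr S(a,b)$ fails. Also, in your ``key extra constraint'' you speak of the ``sum of two largest among the $b+j$ leaves present'' as a clique, but no two leaf cliques are adjacent, so this clique does not exist.

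What the paper actually does, and what makes the $\lceil b/(a+1)\rceil$ appear, is to assign the orders $1,2,\dots,n-2$ to the leaf cliques \emph{jointly}, choosing which orders go to $A'$ and which to $B'$ so that no $g+1$ consecutive integers all land in $B'$ (where $g=\lceil b/(a+1)\rceil$); then setting $c=g+1$ and $d=2$ suffices. For the lower bound, the paper looks at subsets $\{C'\}\cup\{S_1,\dots,S_{i-1}\}$ with $S_1,\dots,S_{n-2}$ the leaf cliques in sorted order, and exploits the pigeonhole fact that some run of $g$ consecutive $S_j$'s must all lie in $B'$; this is what squeezes out the extra $g+1$ vertices. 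Your subsets ``all of $B$ plus the $j$ largest of $A$ plus $D$'' do not isolate this phenomenon, because the relevant tension is between $C'$ (not $D'$) and long $B'$-runs in the joint ordering.
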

\begin{proof}
First we will show that this number of vertices is sufficient to construct a suitable $G$. We need to define orders of replication cliques of $S(a,b)$.

Let $g = \left\lceil \frac{b}{a+1} \right\rceil$. Let $A'$ be the set of replication cliques of vertices of $A$ in $G$, and $B'$ be the set of replication cliques of vertices of $B$ in $G$. Let $C'$ be the replication clique of vertex $C$ and $D'$ be the replication clique of vertex $D$. Let $c$ be the order of $C'$ and $d$ be the order of $D'$.

We get any order of the vertices $\{v_1,v_2,\ldots,v_{n-2}\}$ of $A \cup B$ such that the length of the longest sequence of consecutive vertices from $B$ is at most $g$. Each vertex $v_i$ is replicated to a clique of order $i$. So cliques in $B'$ would have orders $1,2,\ldots,g,g+2,\ldots,2g+1,2g+3,\ldots$ and cliques in $A'$ would have orders $g+1,2g+2,\ldots$. Then let $c = g+1$ and $d = 2$. We will show that the obtained graph $G$ satisfies $G \torr S(a,b)$.

Get a $k$-subset of replication cliques of $G$. Let $L$ be the set of orders of cliques from $A'$ which are in the subset, and let $R$ be the set of orders of cliques from $B'$ which are in the subset. We consider four cases depending if $C'$ or $D'$ is in the subset:
\begin{enumerate}
\item The subset does not contain $C'$ nor $D'$. So because cliques from $A' \cup B'$ have $1,2,\ldots,n-2$ orders and we have $k$ of them there is a clique of order not less than $k$.

\item The subset contains $C'$ but not $D'$. Because $C'$ is connected with the cliques from $C'$ there is a clique of order $\max(L) + g+1$ in the subset. Assume for the contrary that $k > \max(L) + g+1$. So the subset can contain all cliques of order not greater than $\max(L)$ from $A' \cup B'$, there are $\max(L)$ such cliques. Also it contains $C'$. So the remaining cliques must come from $B'$ and there must be $k-\max(L)-1$ such cliques. All of them have orders between $\max(L)+1$ and $k-1$ (inclusive) and there is exactly $k-\max(L)-1$ such cliques in $A' \cup B'$. They all must be in $B'$, but $k-\max(L)-1 > g$ and we defined orders in such a way that the longest consecutive sequences of cliques' orders from $B'$ has length $g$. So at least one clique is in $A'$ so this is impossible.

\item The subset contains $D'$ but not $C'$. The subset contains $k-1$ cliques from $A' \cup B'$ and they must be all cliques of orders $1,2,\ldots,k-1$. Since in our construction no two cliques in $A'$ have orders different by $1$ at least one of two largest cliques of orders $k-2$ and $k-1$ is in $B'$. So together with $D'$ it forms a clique of order at least $k-2+2=k$.

\item The subset contains both $C'$ and $D'$. There are $k-2$ cliques from $A' \cup B'$, so there must be a clique of order at least $k-2$ in the subset. So $\max(L) \ge k-2$ or $\max(R) \ge k-2$. In both cases there is a clique of order $\max(L) + g+1 \ge k-2 + g+1 \ge k$ or $\max(R) + 2 \ge k-2+2 = k$.
\end{enumerate}

So the constructed graph $G$ satisfies the property that for any $k$-subset there is an induced clique of order at least $k$, and so its chromatic number is at least $k$. It remains to show that any graph $G$ requires such number of vertices. So let now $G$ be any replication graph of $S(a,b)$ such that $G \torr S(a,b)$. Observe that $G$ is perfect, so each $k$-subset must have an induced clique of order at least $k$.

We sort orders of cliques from $A' \cup B'$ and obtain an orders sequence $s_1,s_2,\ldots,s_{n-2}$ of cliques $S_1,S_2,\ldots,S_{n-2}$ where $s_i \le s_j$ if $i < j$. Note that $s_i \ge i$. If $s_i - i > 0$ then we say that $S_i$ has $s_i - i$ extra vertices. Also we say that $C$ has $c-1$ extra vertices and $D$ has $d-1$ extra vertices. We will show that $G$ has at least $g+1$ extra vertices in summary.

By definition of $g$ we know that there exists $k \ge g$ such that if we get the first $k$ cliques in our sequence from $A' \cup B'$ then the last $g$ of them (which are $S_{k-g+1},\ldots,S_k$) come from $B'$. Consider a set of $g$ subsets which has for $i$: $k-g+2 \le i \le k+1$ an $i$-subset consisting of the clique $C'$ and the first $i-1$ cliques in our sequence from $A' \cup B'$. For an $i$-subset we define $L_i$ to be a set of orders of cliques which comes from $A'$ and $R_i$ to be a set of orders of cliques which comes from $B'$. So each such $i$-subset has an induced clique of order at least $i$ and so either $i \le \max(L_i)+c$ or $i \le \max(R_i)$. If $L_i$ is empty then we define $\max(L_i)=0$. In an $i$-subset the last $i-(k-g+1) \ge 1$ cliques in our order come from $B'$, so we have $\max(L_i) \le \max(R_i)$.

If for an $i$-subset there is $i \le \max(R_i)$ then $i \le s_{i-1}$. It implies that $S_{i-1}$ from $B'$ has at least one extra vertex. We consider two cases:
\begin{enumerate}
\item If all of these subsets satisfy $i \le \max(R_i)$ then there is an extra vertex in each of $S_{k-g+1},\ldots,S_k$. So we have at least $g$ extra vertices in $B'$.

If $c \ge 2$ or $d \ge 2$ then we have $g+1$ extra vertices in summary. Otherwise consider any $r$-subset ($2 \le r \le n$) consisting of the first $r-2$ cliques in our sequence from $A' \cup B'$ and the cliques $C'$ and $D'$. The largest clique of order at least $r$ must consist of $S_{r-2}$ and either $C'$ or $D'$, so $s_{r-2}+1 \ge r$. So $S_{r-2}$ has at least one extra vertex. Because this holds for any such $r$-subset any clique from $A' \cup B'$ has an extra vertex, and because $a+b > g$ we have at least $g+1$ extra vertices.

\item If for one of these subsets $i > \max(R_i)$ then it must satisfy $i \le \max(L_i)+c$. Consider the largest $i$-subset of them. So we have $k-i+1$ extra vertices in $S_i,\ldots,S_k$ from $B'$ implied by the subsets larger than $i$.

If $L_i$ is empty then we have $i \le c$ and so $C'$ has $i-1$ extra vertices. Together with the $k-i+1$ extra vertices from $B'$ we have $k \ge g$ extra vertices in summary in $B'$ and $C'$.

If $L_i$ is non-empty then let $\max(L_i) = s_j$. $S_j \in A'$ and so $j \le k-g$ because the cliques $S_{k-g+1},\ldots,S_k$ are in $B'$. Having $i \le s_j+c$ we could write $i \le j+e_j+c$ where $e_j$ is the number of extra vertices in $S_j$. From these we obtain $i \le (k-g)+e_j+c$ and so $e_j+c-1 \ge i-k+g-1$. Obtained $e_j+c-1$ is the number of extra vertices in both $S_j \in A'$ and $C'$. If we add the $k-i+1$ extra vertices from $B'$ then we have at least $(i-k+g-1)+(k-i+1)=g$ extra vertices in summary in $A' \cup B'$ and $C'$.

If $d \ge 2$ then we are done, so assume that $d = 1$.

If we add $D'$ to the $i$-subset from this case and obtain an $(i+1)$-subset then it must have an induced clique of order at least $i+1$. So we have the three following subcases depending on what the clique of order at least $i+1$ consist of:
\begin{enumerate}
\item If the clique consists of $D'$ and some clique from $B'$ then $i+1 = max(R_i)+1$ and we have a contradiction as we assumed $i > \max(R_i)$. 

\item If the clique consists of $C'$ and some clique from $A'$ (or just $C'$ if $L_i$ is empty) then $i+1 = \max(L_i)+c$ and we could go similar as in the case:

If $L_i$ is empty then $i+1 = c$ and the $i$ extra vertices in $C'$ together with the $k-i+1$ extra vertices from $B'$ give $k+1 \ge g+1$ extra vertices in summary.

If $L_i$ is non-empty then we could write $i+1 \le s_j+c$ obtaining $(i+1-k+g-1)+(k-i+1)=g+1$ extra vertices in summary in $A' \cup B'$ and $C'$.

\item If the clique consists of $C'$ and $D'$ then $i+1 \le c+1$ and so $i \le c$.

If $L_i$ is non-empty then $i+1 \le \max(L_i)+c$ and we could follow the previous subcase.

Assume that $L_i$ is empty. $C'$ has $i-1$ extra vertices and together with the $k-i+1$ extra vertices from $B'$ we have $k$ extra vertices in summary in $C'$ and $B'$.

If $k > g$ then we are done, so assume that $k = g$. Note that we have counted extra vertices from $B'$ which are only in the first $g$ cliques in our sequence from $A' \cup B'$, also we have counted only one extra vertex in one clique of $B'$.

If $a=1$ and $b=1$ then $S(1,1)$ is $P_4$ and the theorem is true. Otherwise there are at least $g+2$ cliques in $A' \cup B'$. We will show that there is an extra vertex in the cliques $S_{n-3}$ or $S_{n-2}$ which are not in the first $g$ cliques in our sequence, or in some clique in $A' \cup B'$ there are two extra vertices.

Get a $n$-subset consisting of all cliques. There are three sub-subcases depending where is the clique of order at least $n$.
\begin{itemize}
\item If $c+1 \ge n$ then $c$ has $n-2 > g$ extra vertices.

\item If $s_{n-2}+1 \ge n$ (when $S_{n-2} \in B'$) then $S_{n-2}$ has one extra vertex.

\item If $s_{n-2}+c \ge n$ (when $S_{n-2} \in A'$) then we remove $C'$ from the subset and obtain either $s_{n-2} \ge n-1$ or $s_j+1 \ge n-1$ for some $j \le n-3$ ($S_j \in B'$). In the first case we are done. In the second case we have $s_j \ge n-2$ and so $S_j \in B'$ has at least one extra vertex if $j = n-3$ or has at least two extra vertices if $j < n-3$.
\end{itemize}

\end{enumerate}

\end{enumerate}

So in all cases we have at least $g+1$ extra vertices and the theorem holds.

\end{proof}

%%%%%%%%%%%%%%%%%%%%%%%%%%%%%%%%%%%%%%%%%%%%%%%%%%%%%%%%%%%%%%%%%%%%%%%%%%%%%%%%%
\section{Experiments}

\begin{Lemma}For a given graph $G$ which is a replication graph of $P_n$, the problem of verifying if $G \torr P_n$ can be solved in time $O(n^2)$ and memory $O(n)$.\end{Lemma}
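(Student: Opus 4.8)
The plan is to reduce the predicate $G \torr P_n$ to a purely numerical condition on the sequence of orders of the replication cliques, and then to verify that condition by $n$ linear-time scans.

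Observe first that $G$ is determined (up to isomorphism) by $P_n$ together with the orders $a_1,\dots,a_n$ of the replication cliques $K_1,\dots,K_n$ corresponding to the consecutive vertices of $P_n$; this sequence is the $O(n)$-size input, and the algorithm never materializes all $\rho_R(P_n)=\Theta(n^2)$ vertices of $G$. By the Hall-type Theorem proved above, $G\torr P_n$ holds iff $\chi(G[S])\ge|S|$ for every $S\subseteq\{1,\dots,n\}$. Since $P_n$ is perfect, the cited Lemma that replications of perfect graphs are perfect makes $G$ perfect, hence every $G[S]$ is perfect and $\chi(G[S])=\omega(G[S])$; and since $P_n$ is triangle-free, a clique of $G$ meets at most two replication cliques, which must be path-adjacent, so $\omega(G[S])=\max\!\bigl(\max_{i\in S}a_i,\ \max_{\{i,i+1\}\subseteq S}(a_i+a_{i+1})\bigr)$. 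Therefore $G\not\torr P_n$ iff there is a nonempty $S$ with $a_i\le|S|-1$ for all $i\in S$ and $a_i+a_{i+1}\le|S|-1$ whenever $i,i+1\in S$.

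For a fixed integer $k\in\{1,\dots,n\}$, let $A_k=\{i:a_i\le k-1\}$ and let $f(k)$ be the maximum size of a set $S\subseteq A_k$ containing no pair $i,i+1$ with $a_i+a_{i+1}\ge k$. The previous paragraph shows that $G\torr P_n$ iff $f(k)<k$ for every such $k$: if $f(k)\ge k$ then an optimal $S$ is itself a witness, since each of its elements is $\le k-1\le f(k)-1=|S|-1$ and each of its consecutive pairs sums to $\le k-1<f(k)=|S|$; and conversely a witness of size $k$ forces $f(k)\ge k$. The key point is that $f(k)$ is computable in $O(n)$ time: the forbidden pairs, restricted to $A_k$, form a subgraph of $P_n$ whose connected components are sub-paths, and the maximum $|S|$ decomposes as a sum of $\lceil r/2\rceil$ over components of sizes $r$, so a single left-to-right pass maintaining the length of the current run evaluates $f(k)$ (equivalently, a two-variable dynamic program along the path computes $f(k)$ directly). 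Running this pass for $k=1,2,\dots,n$ and stopping as soon as some $f(k)\ge k$ is found gives total time $O(n^2)$; since each pass uses only $O(1)$ working space beyond the stored sequence, the total memory is $O(n)$.

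The substantive step is the linear-time evaluation of $f(k)$; the reduction to the numerical condition is bookkeeping on top of results already established. Accordingly, the part to get right is the correctness of that evaluation — in particular the claim that maximizing $|S|$ decomposes over the components of the forbidden-pair graph into independent-set problems on paths, together with the boundary bookkeeping (isolated allowed vertices, allowed intervals of length one, and the degenerate extremes $k=1$, where $A_1=\emptyset$ and $f(1)=0<1$, and $k=n$). I expect no real difficulty here, only a careful case analysis.
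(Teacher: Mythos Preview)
The paper states this lemma without proof, so there is nothing to compare against; your argument stands on its own and is correct. The reduction via the Hall-type theorem and perfection to the clique-number condition, the reformulation as ``$f(k)<k$ for all $k$'', and the linear-time evaluation of each $f(k)$ as a maximum independent set in a subgraph of $P_n$ are all sound; the only cosmetic point is that you might state explicitly that the ``forbidden-pair graph'' has vertex set $A_k$ (so that isolated vertices of $A_k$ are components of size $1$ contributing $\lceil 1/2\rceil=1$), which you do allude to in the closing remark.
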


By computer search, we have found all minimal order replication graphs for each path up to 16 vertices and so we have the exact value of $\rho_R(P_n)$ for $n \leq 16$. These shows that the upper bound is tend to be very close to the exact value, especially we have the conjectures:

\begin{Conjecture}For odd $n \ge 1$ the upper bound is tight:
$$\rho_R(P_n) = \left\{\begin{array}{ll}
n^2/4+n^2/16+3n/8+5/16 & \mbox{if $n \equiv 1 \mod 4$}\\
n^2/4+n^2/16+3n/8+1/16 & \mbox{if $n \equiv 3 \mod 4$}\\
\end{array}\right.$$
\end{Conjecture}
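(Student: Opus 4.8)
The upper bound being already provided by the theorem of Section~3, the task is entirely the matching lower bound, and I would work with the \emph{order profile} of a candidate $G$. First I would record the odd analogue of the opening step in the proof of Lemma~\ref{lm2}: a replication of a path is perfect and its only maximal cliques are single replication cliques and pairs of adjacent ones, so after sorting the replication cliques by order as $s_1\le\dots\le s_n$ ($s_i$ the $i$-th smallest order) one gets $2s_i\ge i$, i.e.\ $s_i\ge\lceil i/2\rceil$. Hence $|G|\ge\sum_i\lceil i/2\rceil=\big(\tfrac{n+1}{2}\big)^2$ for odd $n$, and writing $f(u)$ for the number of replication cliques of order at most $u$ (so $f(u)\le 2u$), one has the elementary inequality
\[
|G|-\big(\tfrac{n+1}{2}\big)^2\ \ge\ \sum_{u=1}^{(n-1)/2}\big(2u-f(u)\big),
\]
with equality precisely when the largest clique has order $(n+1)/2$. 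A short computation shows the right-hand side of the conjecture equals $\big(\tfrac{n+1}{2}\big)^2+\big(\tfrac{n-1}{4}\big)^2$ when $n\equiv1\pmod4$ and $\big(\tfrac{n+1}{2}\big)^2+\tfrac{n-3}{4}\cdot\tfrac{n+1}{4}$ when $n\equiv3\pmod4$, so the whole statement reduces to proving $\sum_{u=1}^{(n-1)/2}(2u-f(u))$ is at least that quadratic quantity, i.e.\ that the profile $f$ of any $G\torr P_n$ lies far enough below the ceiling $f(u)\le 2u$.

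To control $f$ I would set up a scale-by-scale version of the trade-off behind Lemma~\ref{lm1} and Theorem~\ref{th1}. Fix a threshold $u$. A set of $u+1$ replication cliques of order $\le u$ has chromatic number at least $u+1$, hence an induced clique of that order; since no single such clique suffices, it contains two \emph{adjacent} cliques whose orders sum to $\ge u+1$. So the ``heavy-pair graph'' on the cliques of order $\le u$ (edges $=$ path-adjacent pairs with order sum $\ge u+1$) has no independent set of size $u+1$; being a disjoint union of paths, its independence number equals $f(u)$ minus the size of a maximum matching, whence there is a matching of at least $f(u)-u$ heavy pairs. Now a fixed edge of $P_n$, with incident replication cliques of orders $p\le q$, is heavy only for thresholds in the interval $(q,\,p+q]$, hence for at most $p$ of them, and only while both endpoints are small; summing over edges, the total number of heavy pairs available over all thresholds is at most the quantity $e=\sum_{ab}\min\{|A|,|B|\}$ of Lemma~\ref{lm1}. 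This produces a dichotomy of the Theorem~\ref{th1} type: either at every scale $f(u)$ is forced well below $2u$ — tiny cliques cannot be heavy-paired among themselves, only with medium ones, of which there is a limited stock — producing extra vertices directly; or some scale has so many heavy pairs that it forces many large cliques and Lemma~\ref{lm1} takes over. Summing the deficits $2u-f(u)$ over $u=1,\dots,(n-1)/2$, with the bookkeeping arranged so that an extra vertex on a clique of order $t$ is charged only by the $O(1)$ scales with $u\approx t$ and $u\approx 2t$, should recover the two triangular contributions $\sum_{u<n/4}u$ and $\sum_{n/4\le u\le n/2}(u-\tfrac n4)$ visible in the Section~3 construction, of total $\tfrac{n^2}{16}+O(n)$.

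The step I expect to be the main obstacle is making this per-scale dichotomy lossless. The single application in Theorem~\ref{th1} discards an absolute constant factor — which is exactly why the published lower bound has coefficient $\tfrac1{784}$ and not $\tfrac1{16}$ — and no admissible choice of $a',a,b,c,d$ there can reach $\tfrac1{16}$, since its constraints force both minimands of the bound far below $\tfrac{n^2}{16}$. So instead of re-running the trade-off in isolation at a single scale, one needs a global potential (or a top-down sweep over $u$) under which each heavy pair and each extra vertex is charged at most once — essentially, upgrading the counting in the proof of Lemma~\ref{lm1} so that it simultaneously tracks, at all scales, which small cliques sit next to which medium cliques. A secondary and more mechanical difficulty is that the additive constant depends on $n\bmod 4$ ($5/16$ versus $1/16$), so the ceilings must be carried exactly rather than asymptotically, with the base cases $n\le16$ supplied by the computer search. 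Finally, one should expect the same machinery to give, for even $n$, a lower bound $n^2/4+n^2/16+O(n)$ but with an additive term weaker than the construction achieves — consistent with the conjecture being asserted for odd $n$ only.
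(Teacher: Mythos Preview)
The statement you are attempting to prove is presented in the paper as a \emph{conjecture}, not a theorem: the paper gives no proof whatsoever, and the only evidence offered is the computer search verifying the equality for $n\le 16$. So there is no ``paper's own proof'' to compare your proposal against; you are attempting to settle an open problem.

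As a research outline your proposal is coherent and the preliminary calculations are correct: the baseline $\sum_i\lceil i/2\rceil=((n+1)/2)^2$, the rewriting of the target as $((n+1)/2)^2+((n-1)/4)^2$ resp.\ $((n+1)/2)^2+\tfrac{n-3}{4}\cdot\tfrac{n+1}{4}$, the identity $|G|-((n+1)/2)^2\ge\sum_{u\le(n-1)/2}(2u-f(u))$, and the heavy-pair/matching argument showing that at each threshold $u$ the small cliques carry a matching of at least $f(u)-u$ heavy adjacent pairs, each such pair being heavy for at most $\min(|A|,|B|)$ thresholds so that the total supply is bounded by the quantity $e$ of Lemma~\ref{lm1}. (Minor slip: the interval of thresholds for which an edge with orders $p\le q$ is heavy is $[q,\,p+q-1]$, not $(q,\,p+q]$; the count $p$ is unchanged.)

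But this is not a proof, and you say so yourself. The entire weight of the conjecture sits on the step you label ``the main obstacle'': turning the per-scale trade-off into a lossless global charging. Your sketch gives $e\ge\sum_u\max(f(u)-u,0)$ and separately $|G|\ge n(n+1)/2-e$ from Lemma~\ref{lm1}, together with $|G|\ge((n+1)/2)^2+\sum_u(2u-f(u))$; combining these three inequalities additively still loses a constant factor, exactly as Theorem~\ref{th1} does, because a single ``extra vertex'' can relieve the deficit $2u-f(u)$ at many scales at once while your accounting charges it repeatedly. You assert that the bookkeeping can be ``arranged so that an extra vertex on a clique of order $t$ is charged only by the $O(1)$ scales with $u\approx t$ and $u\approx 2t$'', but no mechanism is given for this, and it is precisely the missing idea. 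Moreover, getting the exact additive constants $5/16$ and $1/16$ would require the argument to be tight to the last unit, not merely up to $O(n)$; invoking the computer search for $n\le16$ does not help here, since the purported inductive or charging structure has not been set up in a way that small cases could anchor. As it stands, the proposal is a plausible plan of attack with a clearly identified and unresolved gap, not a proof.
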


\begin{Conjecture}For any $n \ge 1$ there is $\rho_R(P_n) = n^2/4+n^2/16+O(n)$.\end{Conjecture}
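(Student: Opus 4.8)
Since the upper bound of Section~3 gives $\rho_R(P_n) \le n^2/4 + n^2/16 + O(n)$ in every residue class, the conjecture is equivalent to the matching lower bound $\rho_R(P_n) \ge n^2/4 + n^2/16 - O(n) = \tfrac{5}{16}n^2 - O(n)$; that is, one must improve the coefficient $1/784$ coming from Theorem~\ref{th1} and its Corollary to $1/16$. The plan is to prove this first for even $n$, strengthening Theorem~\ref{th1}, and then to deduce the odd case from the easy monotonicity $\rho_R(P_{n-1}) \le \rho_R(P_n) \le \rho_R(P_{n+1})$ (if $G\torr P_{m+1}$ then deleting an end clique gives a graph $G'\torr P_m$, since the Hall-type conditions for subsets of $P_m$ are a subfamily of those for $P_{m+1}$), which forces $|\rho_R(P_{n\pm 1}) - \rho_R(P_n)| = O(n)$.

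For even $n$ I would work in the reformulation furnished by the proof of Theorem~\ref{th1} and by Lemma~\ref{lm2}: writing $o_1,\dots,o_n$ for the orders of the consecutive replication cliques, $G\torr P_n$ holds iff every $k$-subset of indices contains an $i$ with $o_i\ge k$ or two consecutive $i,i+1$ with $o_i+o_{i+1}\ge k$, so $\rho_R(P_n)=\min\sum_i o_i$ over feasible sequences. After sorting the orders to $s_1\le\dots\le s_n$ one has $s_i\ge\lceil i/2\rceil$, which already contributes the baseline $(n/2)(n/2+1)=n^2/4+O(n)$; the target becomes to show that the number of \emph{extra} vertices $\sum_i(s_i-\lceil i/2\rceil)$ is at least $n^2/16-O(n)$. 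A sanity check on the extremal construction of Section~3 confirms the target: its sorted profile rises by $1$ up to $n/4$, plateaus, rises by $1$ up to $n/2$, plateaus, and the slack $s_i-\lceil i/2\rceil$ splits into four roughly-triangular blocks of area $\approx n^2/64$.

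The technical heart is to replace the single medium band $[an,a'n]$ and single low threshold $bn$ of Theorem~\ref{th1} by a continuum of bands and integrate. For a parameter $t\in[1/4,1/2]$, Lemma~\ref{lm2} says that about $2tn$ cliques have order at most $tn$, so a clique of order $\approx tn$ sits near sorted position $2tn$, where the baseline $\lceil i/2\rceil$ is only $\approx tn$: there is no slack exactly at the top of the band and forced positive slack below it, growing linearly in $1/2-t$. Feasibility lets such a clique be "rescued" instead by a large neighbour, but each clique, having at most two path-neighbours, can rescue at most two others; bounding the global rescue budget against the total forced slack, and optimising the band widths, should recover precisely the integrated form of the four $n^2/64$ blocks. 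The step I expect to be the main obstacle is exactly this accounting: "rescue by a neighbour" is a statement about the linear order along the path, not about the sorted multiset, so Lemma~\ref{lm2} alone does not see it. One needs a charging scheme in which each clique is charged for the slack it must carry, each large clique absorbs at most two units from each side, and the charges are localised enough along $P_n$ that double counting stays $O(n)$; making the constants land on $1/16$ rather than a smaller fraction requires this scheme to be essentially tight, which is why the even case is the real work. Once it is in place, the monotonicity argument above delivers the odd case, and together with the exact constructions of Section~3 it is the natural route toward the sharper conjecture on odd $n$ as well.
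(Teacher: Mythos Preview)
The statement you are attempting to prove is stated in the paper as a \emph{Conjecture}, not a theorem; the paper offers no proof and treats it as open, supported only by the upper bound of Section~3 and computer data for $n\le 16$. So there is no paper proof to compare your proposal against.

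As for your proposal itself, it is not a proof but a plan, and you say so yourself: the phrase ``should recover precisely the integrated form'' and your explicit identification of the charging scheme as ``the main obstacle'' mean that the central step is absent. Everything up to that point is sound bookkeeping: the upper bound indeed gives $5n^2/16+O(n)$; the monotonicity $\rho_R(P_{n-1})\le\rho_R(P_n)$ via deleting an end clique is correct and does reduce the odd case to the even one up to $O(n)$; and the reformulation in terms of the sorted profile $s_i\ge\lceil i/2\rceil$ with $\sum(s_i-\lceil i/2\rceil)\ge n^2/16-O(n)$ as the target is the right normalisation. But the passage from Lemma~\ref{lm2}, which only controls the sorted multiset of clique sizes, to a lower bound that must exploit adjacency along $P_n$ is exactly where Theorem~\ref{th1} already loses the constant (landing at $1/784$ instead of $1/16$), and you have not supplied any new mechanism that closes this gap. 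A ``continuum of bands'' integrated against Lemma~\ref{lm2} still only sees the multiset, not the path order, so without the charging scheme actually written down and shown tight, the argument does not improve on the paper's Corollary. In short: your outline correctly locates the difficulty, but it does not resolve it, and resolving it is the whole content of the conjecture.
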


Here are the exact values, the number of minimal order replication graphs and representations of some of them.

{\small\twocolumn
$\rho_R(P_5) = 10 (5 graphs)\\
\begin{array}{rlllll}
1)&1&2&2&2&3\\
2)&1&2&3&2&2\\
3)&2&1&3&2&2\\
4)&2&2&1&2&3\\
5)&3&1&1&2&3\\
\end{array}$\vs

$\rho_R(P_6) = 14 (14 graphs)\\
\begin{array}{rllllll}
1)&1&2&3&2&2&4\\
2)&1&2&3&4&2&2\\
3)&1&3&2&2&3&3\\
4)&1&3&3&2&2&3\\
5)&1&3&3&3&2&2\\
6)&2&1&3&2&2&4\\
7)&2&2&3&1&2&4\\
8)&2&2&3&1&3&3\\
9)&2&2&3&3&2&2\\
10)&2&2&4&1&2&3\\
11)&2&3&3&1&2&3\\
12)&2&3&3&2&1&3\\
13)&3&2&1&1&3&4\\
14)&3&2&1&2&2&4\\
\end{array}$\vs

$\rho_R(P_7) = 18 (3 graphs)\\
\begin{array}{rlllllll}
1)&1&2&3&4&2&2&4\\
2)&1&3&2&2&4&3&3\\
3)&2&3&3&2&1&3&4\\
\end{array}$\vs

$\rho_R(P_8) = 23 (1 graph)\\
\begin{array}{rllllllll}
1)&1&4&4&2&2&4&3&3\\
\end{array}$\vs

$\rho_R(P_9) = 29 (20 graphs)\\
\begin{array}{rlllllllll}
1)&1&3&3&3&4&5&2&3&5\\
2)&1&3&4&5&2&3&3&3&5\\
3)&1&3&4&5&2&3&5&3&3\\
4)&1&3&4&5&3&2&5&3&3\\
5)&1&4&3&2&4&4&5&3&3\\
6)&1&4&4&2&2&4&3&3&6\\
7)&1&5&3&3&4&2&2&4&5\\
8)&1&5&4&2&2&4&3&3&5\\
9)&2&2&4&3&3&5&1&4&5\\
10)&2&2&4&5&1&4&3&3&5\\
11)&2&3&5&4&3&1&5&3&3\\
12)&3&2&5&4&3&1&5&3&3\\
13)&3&3&1&3&4&5&2&3&5\\
14)&4&2&1&3&4&5&2&3&5\\
15)&4&3&1&2&4&5&2&3&5\\
16)&4&3&1&5&4&2&2&3&5\\
17)&4&3&2&4&4&2&1&4&5\\
18)&4&3&2&4&5&1&2&3&5\\
19)&5&2&2&4&4&2&1&4&5\\
20)&5&3&2&1&5&4&2&2&5\\
\end{array}$\vs

$\rho_R(P_{10}) = 35 (3 graphs)\\
\begin{array}{rllllllllll}
1)&1&3&4&5&2&3&5&6&3&3\\
2)&1&5&4&2&2&5&5&3&3&5\\
3)&5&3&2&5&4&3&1&2&4&6\\
\end{array}$\vs

$\rho_R(P_{11}) = 42 (16 graphs)\\
\begin{array}{rlllllllllll}
1)&1&3&4&5&2&3&5&6&3&3&7\\
2)&1&3&4&5&2&3&5&6&3&4&6\\
3)&1&3&4&5&3&3&5&6&2&4&6\\
4)&1&3&4&6&2&3&5&6&3&3&6\\
5)&1&3&5&5&2&3&5&6&3&3&6\\
6)&1&3&5&5&3&2&5&6&3&3&6\\
7)&1&4&5&6&3&3&4&2&6&4&4\\
8)&1&6&4&4&5&2&3&5&6&3&3\\
9)&1&6&5&2&2&4&5&5&3&3&6\\
10)&1&6&5&2&3&5&4&4&6&3&3\\
11)&2&2&5&5&2&3&5&6&3&3&6\\
12)&3&4&5&2&3&5&5&3&1&5&6\\
13)&3&4&5&2&3&5&6&2&2&4&6\\
14)&3&5&5&3&2&5&4&3&1&5&6\\
15)&4&4&5&3&2&5&5&2&1&5&6\\
16)&4&4&5&3&2&5&6&1&2&4&6\\
\end{array}$\vs

$\rho_R(P_{12}) = 49 (1 graph)\\
\begin{array}{rllllllllllll}
1)&1&6&5&2&3&6&6&3&3&6&4&4\\
\end{array}$\vs

$\rho_R(P_{13}) = 58 (116 graphs)$\\

$\rho_R(P_{14}) = 66 (3 graphs)\\
\begin{array}{rllllllllllllll}
1)&1&4&5&7&2&4&6&7&3&4&7&8&4&4\\
2)&1&7&5&3&2&7&6&3&3&7&7&4&4&7\\
3)&1&7&6&2&3&6&6&3&3&7&7&4&4&7\\
\end{array}$

}\onecolumn

We have observed also an interesting property about the cycles and state the following conjecture:

\begin{Conjecture}$\rho_R(C_n) = \rho_R(P_n)$ for $n \ge 6$.\end{Conjecture}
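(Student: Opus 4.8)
The plan is to prove the two inequalities $\rho_R(C_n)\le\rho_R(P_n)$ and $\rho_R(C_n)\ge\rho_R(P_n)$ separately; the first is routine and the second carries all the difficulty.

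\textbf{Upper bound.} Let $G$ be a minimal replication graph of $P_n$ with $G\torr P_n$, with replication cliques $K_1,\dots,K_n$ in path order, and let $G'$ be $G$ with all edges between $K_1$ and $K_n$ added, so that $G'$ is a replication graph of $C_n$ on the same cliques. For every subset $S$ of the vertices of $C_n$ we have $G'[S]\supseteq G[S]$ as graphs, hence $\chi(G'[S])\ge\chi(G[S])\ge|S|$ by the Hall-type theorem applied to $G\torr P_n$; the Hall-type theorem in the reverse direction then gives $G'\torr C_n$, whence $\rho_R(C_n)\le|G'|=|G|=\rho_R(P_n)$. This argument is valid for all $n\ge 3$; the hypothesis $n\ge 6$ is needed only below, and it is necessary, since for $n=3,4,5$ one checks directly that $\rho_R(C_n)<\rho_R(P_n)$ — for instance the cyclic clique-order sequence $(1,2,2,2,2)$ certifies $\rho_R(C_5)\le 9<10=\rho_R(P_5)$.

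\textbf{Lower bound: reduction to a cut.} It suffices, given a minimal $G'$ with $G'\torr C_n$ and cyclic replication-clique orders $a_1,\dots,a_n$, to produce a replication graph of $P_n$ with the $\torr$-property having at most $\sum a_i=|G'|$ vertices. The natural move is a \emph{cut}: deleting from $G'$ all edges between two consecutive cliques $K_m,K_{m+1}$ yields a replication graph of $P_n$ with the same $\sum a_i$, its two end-cliques being $K_m$ and $K_{m+1}$. Since replications of unions of arcs of $C_n$ are perfect, the Hall-type theorem shows that the cut at the edge $e_m=\{m,m+1\}$ fails to have the $\torr$-property only if there is a \emph{witness} $S_m$ with $m,m+1\in S_m$, $|S_m|\le a_m+a_{m+1}$, such that every clique in $S_m$ has order $<|S_m|$ and every edge of $C_n$ inside $S_m$ other than $e_m$ has weight (sum of its two orders) $<|S_m|$. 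For odd $n$ one first reduces to the case that $G'$ has an edge of weight $\ge n$: otherwise the chromatic number of the full blow-up is bounded by the maximum of the largest edge weight and $\lceil(\sum a_i)/\lfloor n/2\rfloor\rceil$, which is $<n$ unless $\sum a_i\ge(n-1)^2/2\ge\rho_R(P_n)$, in which case there is nothing to prove. So the task becomes: show that for $n\ge 6$ the $n$ edges of $C_n$ cannot all carry a witness.

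\textbf{The crux.} Pick an edge $e_*=\{m_*,m_*+1\}$ of minimum weight $w_*$. Every edge of $C_n$ has weight $\ge w_*\ge|S_*|$, so the witness $S_*$ can contain no edge of $C_n$ other than $e_*$; hence $S_*$ consists of $m_*,m_*+1$ together with pairwise non-adjacent cliques, none adjacent to $m_*$ or $m_*+1$, which forces $|S_*|\le\lceil n/2\rceil$, while all $|S_*|$ cliques of $S_*$ have order $<|S_*|$. From here I would aim for a contradiction by combining (i) the sorted-order bound $s_i\ge\lceil i/2\rceil$ used in the proof of Lemma~\ref{lm2}, which caps the number of cliques of order $<|S_*|$ by $2|S_*|-2$; (ii) the minimality of $G'$, which forces every clique — in particular each small clique inside $S_*$ — to lie in some subset that is tight and would be violated if that clique's order were decreased; and (iii) the witnesses attached to the edges incident to and near $e_*$, whose defining inequalities pull against the constraints imposed by $S_*$. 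I expect ingredient (iii) — proving that the local configurations forced simultaneously at all $n$ edges are inconsistent once $n\ge 6$ — to be the principal obstacle, since the witness condition is essentially local and it is not transparent how a bad configuration around one edge is felt at the others; the small cases $n\le 5$ confirm that some genuinely size-dependent input is unavoidable. Once some cut is shown to retain the $\torr$-property, $\rho_R(C_n)\ge\rho_R(P_n)$ follows, and together with the upper bound this proves the conjecture.
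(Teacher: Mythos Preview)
The statement you are attempting to prove is listed in the paper as a \emph{conjecture}, not a theorem: the authors give no proof, only the remark that it has been verified by computer for $n\le 12$. So there is no ``paper's own proof'' to compare your argument against, and any complete argument would be a new result rather than a reconstruction.

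On the merits of your proposal itself: the upper bound $\rho_R(C_n)\le\rho_R(P_n)$ is fine and complete --- adding the wrap-around edges can only raise $\chi(G'[S])$, and the Hall-type theorem applies to $C_n$ just as well as to $P_n$. The lower bound, however, is not a proof but a plan with a gap you yourself flag: you reduce to showing that not every edge of $C_n$ can carry a ``witness'' set, isolate the minimum-weight edge, and then list three ingredients you would \emph{like} to combine, conceding that ingredient~(iii) --- propagating the local witness constraints around the cycle to force a global contradiction for $n\ge 6$ --- is ``the principal obstacle'' and that it is ``not transparent'' how to do it. That is precisely the missing idea; without it the argument does not close. (There is also a side issue in your odd-$n$ reduction: the claimed bound on the chromatic number of an odd-cycle blow-up by $\max\{\text{largest edge weight},\ \lceil(\sum a_i)/\lfloor n/2\rfloor\rceil\}$ needs justification, and even granting it the subsequent numerical comparison with $\rho_R(P_n)$ should be checked more carefully against the paper's bounds.)

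In short: your upper bound is valid, your lower-bound outline is a reasonable line of attack, but the conjecture remains open both in the paper and in your write-up.
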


We note that $\rho_R(C_4) = 6$, $\rho(C_5) = 9$. The conjecture was checked for $n \le 12$.

%%%%%%%%%%%%%%%%%%%%%%%%%%%%%%%%%%%%%%%%%%%%%%%%%%%%%%%%%%%%%%%%%%%%%%%%%%%%%%%%%

%%%%%%%%%%%%%%%%%%%%%%%%%%%%%%%%%%%%%%%%%%%%%%%%%%%%%%%%%%%%%%%%%%%%%%%%%%%%%%%%%
\end{document}